\newtheorem{theorem}{Theorem}[section]
\newtheorem{lemma}[theorem]{Lemma}
\newtheorem{corollary}[theorem]{Corollary}
\newtheorem{claim}[theorem]{Claim}
\theoremstyle{definition}
\newtheorem{definition}[theorem]{Definition}
\newtheorem{question}[theorem]{Question}
\newtheorem{remark}[theorem]{Remark}
\newcommand{\set}[1]{\{#1\}}
\newcommand{\eps}{\epsilon}
\DeclareMathOperator{\poly}{poly}
\DeclareMathOperator{\triangles}{triangles}
\newcommand{\eqdef}{\stackrel{\text{\tiny\rm def}}{=}}
\newcommand{\alg}{{\mathcal A}}
\newcommand{\fv}[1]{v^{(#1)}} 
\newcommand{\ft}[1]{F_2(\fv{#1})} 
\newcommand{\BMAM}{Bounded--Memory Adversary Model}
\newcommand{\tSAM}{$\tau$--Stream Adversary Model}
\newcommand{\Adv}{\textnormal{\texttt{Adversary}}}
\newcommand{\Alg}{\textnormal{\texttt{Algorithm}}}
\newcommand{\R}{\mathbb{R}}
\newcommand{\N}{\mathbb{N}}
\title{
Robust Streaming Against Low--Memory Adversaries %
}
\author{
Omri Ben-Eliezer\\
Technion
\and
Krzysztof Onak\\
Boston University
\and
Sandeep Silwal\\
University of Wisconsin-Madison}
\date{November 2025}
\begin{document}
\maketitle

\begin{abstract}
Robust streaming, the study of streaming algorithms that provably work when the stream is generated by an adaptive adversary, has seen tremendous progress in recent years. However, fundamental barriers remain: the best known algorithm for turnstile $F_p$--estimation in the robust streaming setting is exponentially worse than in the oblivious setting, and closing this gap seems difficult.
Arguably, one possible cause of this barrier is the adversarial model, which may be too strong: unlike the \emph{space--bounded} streaming algorithm, the adversary can memorize the entire history of the interaction with the algorithm. Can we then close the exponential gap if we insist that the adversary itself is an \emph{adaptive but low--memory} entity, roughly as powerful as (or even weaker than) the algorithm?

In this work we present the first set of models and results aimed towards this
question. We design efficient robust streaming algorithms against adversaries
that are fully adaptive but have no long--term memory (``memoryless'') or very
little memory of the history of interaction. Roughly speaking, a memoryless
adversary only sees, at any given round, the last output of the algorithm (and
does not even know the current time) and can generate an unlimited number of
independent coin tosses. A low--memory adversary is similar, but maintains an
additional small buffer. While these adversaries may seem quite limited at first
glance, we show that this adversarial model is strong enough to produce streams
that have high flip number and density in the context of $F_2$--estimation,
which rules out most of known robustification techniques. We then design a new
simple approach, similar to the computation paths framework, to obtain efficient
algorithms against memoryless and low--memory adversaries for a wide class of
order--invariant problems. We conclude by posing various open questions
proposing further exploration of the landscape of robust streaming against fully
adaptive but computationally constrained adversaries.
\end{abstract}

\noindent

\section{Introduction}
Adversarially robust streaming, i.e., the study of streaming algorithms that are robust against adaptive adversaries, has received a great deal of attention in the last few years (see, e.g., \cite{hardt_woodruff, BY20,advrob, dp_advrob, WZ2021, braverman2021adversarial, BEO22,  cs_one, cs_two,SSS23,cherapanamjeri2023robust,GribelyukLWYZ24,WZ2024,GribelyukLWYZ25,MN2021,menuhin2025shufflingcardslittlebrain,ACSS24,KMNS21,adaptive_cms} and the many references within). In the streaming setting, the goal of the algorithm is to approximately compute some function over the stream---say, the $F_p$--moment\footnote{Assuming that the elements in the stream come from a finite universe, such as $[n]$, the \emph{$F_p$--moment} of the stream is $\sum_{i \in [n]} |f_i|^p$, where $f_i$ is the frequency (possibly negative) of item $i$ in the stream.}---using as little space as possible, ideally using memory that is polylogarithmic in the stream length $m$ and the underlying universe size $n$, and polynomial in the inverse of the approximation parameter $\eps$. In the standard (black box) adversarially robust streaming model, the stream is generated by an adversary who sees and memorizes all outputs of the algorithm along the stream. The adversary can choose each stream update in a way that depends on the \emph{entire history} of the interaction between the adversary and the algorithm---that is, on all stream updates and algorithm outputs so far. This is in contrast to the standard oblivious model, in which the stream updates do not depend on the actions of the streaming algorithm (i.e., the stream is thought of as generated beforehand and is not adaptive).

Arguably the most interesting open question in adversarially robust streaming
concerns the space complexity of turnstile moment estimation
($F_p$--estimation), in particular for $p=0$ and $p=2$ \cite{Jayaram2021}. In
the turnstile streaming setting, both insertions and deletions of elements are
allowed. For simplicity, we shall consider the setting of unit updates, in which each stream update
either inserts or deletes a single copy of one item chosen by the adversary (and
thus changes the frequency of the item by $+1$ or $-1$,
respectively).
Unfortunately, there are exponential gaps between the optimal space complexity against oblivious adversaries and the best--known streaming algorithms in the presence of adaptive adversaries. In the oblivious case, algorithms based on linear sketching attain space complexity that is logarithmic in $m$ and $n$ for $F_p$--estimation when $p \leq 2$. Meanwhile, in the adversarially robust regime, all existing solutions have space complexity polynomial in the length $m$ of the stream---an exponentially worse dependence in $m$. For example in $F_2$--estimation, the best known algorithms \cite{BEO22,WZ2024} require space complexity $\tilde{O}(m^{0.4})$, which is exponentially worse than that of linear sketches in oblivious streams. Thus, closing this gap between oblivious and robust turnstile $F_p$--streaming is a central and likely challenging open problem \cite{Jayaram2021}.

Upon a closer inspection of the robust streaming setting, however, one can easily see that there is an inherent power imbalance between 
the algorithm and the adversary in this model. While the streaming algorithm is required to use a small amount of space (much smaller than $m$, the length of the stream), the adversary inherently has memory of size at least $m$, as it retains all previous stream updates and all previous outputs of the algorithm.

In this work we set out to explore streaming models that maintain a better balance between the respective computational power (in this case, the amount of memory) of the algorithm and that of the adversary. Our main high--level motivating question is:
\begin{quote}
\centering
    \emph{Is it possible to obtain robust streaming algorithms for, say, $F_2$--estimation\\against adversaries with memory size $k$, where the space complexity of the\\algorithm is polylogarithmic in $m$ and $n$, and polynomial in $1/\eps$ and $k$?}
\end{quote}
While we are not able to fully resolve the above question (and it remains a very intriguing open problem), we obtain the first positive results of this flavor.

Importantly, there are other beyond worst case models in the literature that
constrain the adversary. Examples include models in which the adversary has a
limited amount of adaptive interaction with the algorithm (i.e., it can modify
the future input stream only a limited number of times)
\cite{SSS23,cherapanamjeri2023robust}; or models in which the algorithm
occasionally receives a small amount of advice from a powerful (and not
memory--constrained) oracle \cite{SSS23}. However, none of these models capture
the natural setting considered in this paper, in which the adversary is fully
adaptive---that is, it sees all outputs of the algorithm and can immediately
edit the next input after each algorithm's output---but constrained in terms of
its internal memory size.

\subsection{Our Model}
Let us first quickly recall the standard adversarial model \cite{BY20,advrob}.
At a high level, the streaming algorithm receives $m$ updates of items
(insertions and deletions) from a finite universe of size $n$. In general, $n$
can be much bigger than $m$, but without loss of generality, one can assume that
$n = O(\poly(m))$, because one can randomly hash the stream elements into $O(\poly(m))$
buckets, while ensuring no collisions for different items with high probability
(with the caveat that one would also need space to store the hash function).
Thus, for simplicity, we assume items are elements in the domain
$[n]$. A key concept is the notion of a \emph{frequency vector} of the
underlying stream, which simply records the number of times every item has
appeared (where a deletion is counted as a negative occurrence, see Definition
\ref{def:streams} for details).

The goal of the streaming algorithm is to estimate functions of the underlying
stream, where at any point in the stream, the function only depends on the
current frequency vector. We consider the \emph{tracking} variant in which the algorithm
must output an estimate after every stream update.  Perfect estimation is of
course possible by exactly recording the frequency vector, but the goal of the
entire field of streaming algorithms is to design algorithms
that operate in space sublinear in the size of the input
(in our case, ideally poly-logarithmic in $n$ and $m$).
This setting is quite
general and captures a wide range of streaming problems, including
$F_p$--estimation and graph streaming, among others
\cite{Muthu2005,McGregor2014}.

\paragraph{The bounded--memory adversarial model.}
In our adversarial setting,
the stream updates are controlled by a possibly randomized adversary. Crucially, the adversary is memory--bounded and its actions only depend on the \emph{most recent} estimate of the algorithm, with possibly a very small amount of longer--term memory. To model this, we define several types of memory for the adversary. See Figure~\ref{fig:model_diagram} for a visual depiction and Section \ref{sec:model_definition} for a formal definition of the model.

\begin{itemize}
\item \textbf{Estimate memory:} We assume the adversary has an \emph{estimate memory} where the most recent estimate of the algorithm (and nothing else) is always written. (We do not bound the amount of this memory but for problems that we consider, a good approximation to a function's value can easily be provided in small memory, using the standard number representation.)
\item \textbf{Working memory:} We equip the adversary with a \emph{working memory}, which is also unbounded. This allows the adversary to compute an arbitrary function of the latest estimate in the estimate memory.
Note that this memory would easily allow the adversary to write down and remember the entire history of  the algorithm's outputs, thus reducing our setting to the general robust streaming one. Thus, in our setting, the working memory is wiped clean after the adversary computes the next update for the streaming algorithm.
\item \textbf{Persistent memory:} The adversary may have a (limited amount of) \emph{persistent memory} which \emph{does} allow the adversary to store and update a bounded amount of information throughout the entire stream. If the persistent memory is limited to zero bits, then we refer to the adversary as \emph{memoryless}.
\item \textbf{Randomness:} Lastly, we also allow the adversary to use as much randomness as it desires, and assume it has access to random bits freshly drawn at every round. These freshly generated random bits are not retained for future rounds.\footnote{ One could also consider allowing the adversary to generate and memorize an unlimited amount of randomness in the beginning of the process, before any interaction with the adversary. However, as we discuss in Remark~\ref{remark:randomness}, this would not effectively change the power of the adversary.}
Thus, a memoryless algorithm samples an update from some distribution $P_y$ that depends on the latest
estimate $y$ provided by the algorithm.
For a low--memory algorithm, this distribution depends on the last output and on the current state of the persistent memory. As usual, a memoryless (or bounded--memory) adversary is deterministic if it does not use any random bits during its operation.
\end{itemize}

\begin{figure}[h]
\hspace*{-1.5cm}  
    \includegraphics[width=20cm]{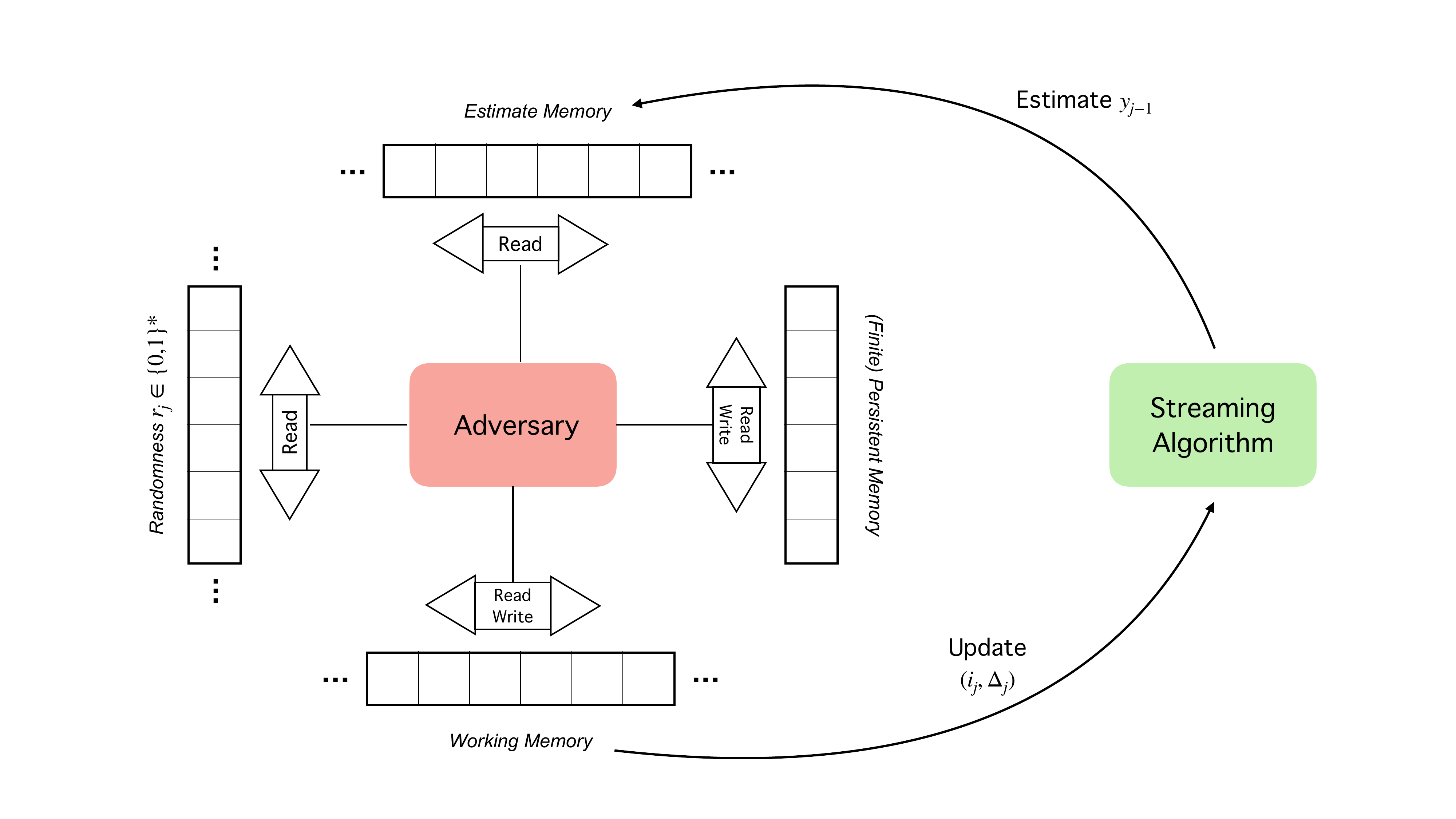}
    \caption{A pictorial representation of our streaming model. Not shown: The estimate memory is overwritten by the estimate $y_{j-1}$ of the streaming algorithm, and the working memory is wiped clean after each update $(i_j, \Delta_j)$ is computed. We also don't show the inner mechanisms of the streaming algorithm or the potential strategy the adversary is following. See Definition \ref{def:robust_streaming} for the formal presentation.}
    \label{fig:model_diagram}
\end{figure}

\subsection{Our Results}\label{sec:our_results}

\paragraph{Robust streaming against deterministic memoryless adversaries.}
The weakest fully--adaptive adversary whom one could imagine in our context is a
deterministic memoryless adversary. Note that for this adversary, the random
read--only memory and the persistent memory are empty, and the working memory
starts each round empty as well. So the adversary makes the decision what update
to send only based on the last output of the algorithm.  Through the use of a
standard rounding--based mechanism and sparsity arguments, we obtain the
following.

\begin{theorem}[Deterministic adversaries; informal version of Theorem \ref{thm:deterministic_adversary} and Corollary~\ref{thm:deterministic_adversary2}]
\label{thm:det_informal}
There exists a deterministic streaming algorithm that, for a fixed function $f$ of the underlying stream frequency vector
with range $\{0\} \cup [1,\alpha]$,
tracks the value of $f$ up to a multiplicative factor of $1+\eps$
against a deterministic memoryless adversary, using $O\left(\frac{\log(\alpha)}{\eps} \cdot \log(mn)\right)$ bits of space. If the adversary is deterministic and has $k$ bits of persistent memory, the space bound is $O\left(2^k \cdot \frac{\log(\alpha)}{\eps} \cdot \log(mn)\right).$
\end{theorem}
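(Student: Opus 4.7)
The plan is to maintain the current frequency vector exactly on its support and to exploit the adversary's determinism to argue that this support stays small throughout the stream. Concretely, the algorithm keeps a dictionary $D$ whose keys are the items with nonzero current frequency and whose values are the signed counts. On every update $(x,\sigma)$ it adjusts the entry for $x$ (removing $x$ from $D$ if its count returns to $0$), computes $f$ exactly on the sparse vector represented by $D$, rounds this value to the closest element of the multiplicative grid $G = \{0\} \cup \{(1+\eps)^i : 0 \le i \le \lceil \log_{1+\eps}\alpha \rceil\}$, and outputs the rounded value. The $(1+\eps)$-multiplicative guarantee is immediate from the rounding, using that $f$ takes values in $\{0\} \cup [1,\alpha]$.

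The heart of the argument is to bound $|D|$. Because the adversary is deterministic and memoryless, the update it sends at round $j$ is a fixed deterministic function of the value currently in its estimate memory, which is precisely the algorithm's previous output $y_{j-1} \in G$ (with $y_0$ being a fixed initial value, included in $G$). Hence the adversary is described by a look-up table with at most $L := |G| + 1 = O(\log \alpha / \eps)$ rows, each prescribing one response pair $(x_\ell,\sigma_\ell) \in [n] \times \{-1,+1\}$. Only the items $\{x_1,\dots,x_L\}$ can ever appear in the stream, so the support of the frequency vector has size at most $L$ at every step, and thus $|D| \le L$. Storing $D$ uses $O(L \cdot \log(mn))$ bits (item indices in $\log n$ bits, counts in $O(\log m)$ bits), yielding the claimed $O\bigl(\frac{\log \alpha}{\eps} \cdot \log(mn)\bigr)$ space bound.

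For a deterministic adversary with $k$ bits of persistent memory, I would run the same algorithm and extend the counting argument. The adversary is now a deterministic function $(\text{state},\text{last output}) \mapsto (\text{update},\text{new state})$ with at most $2^k$ states, so the look-up table has at most $2^k \cdot L$ rows. At most $2^k \cdot L$ distinct items therefore appear in the stream, giving the stated $O\bigl(2^k \cdot \frac{\log \alpha}{\eps} \cdot \log(mn)\bigr)$ space.

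The main subtlety I expect to require some care is the treatment of the very first round, when the algorithm has not yet produced an output: I would fix a canonical initial estimate (say $0$) written to the adversary's estimate memory, which only absorbs a constant into $L$. A secondary point is that we need $f$ to be evaluable on $D$ within the stated budget; but since $D$ is a complete representation of the current frequency vector, this follows from the theorem's standing assumption that $f$ depends only on that vector, and for standard examples such as $F_p$-moments it is immediate (sum $|f_i|^p$ over the keys of $D$).
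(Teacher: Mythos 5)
Your proposal is correct and follows essentially the same approach as the paper's proof of Theorem~\ref{thm:deterministic_adversary} and Corollary~\ref{thm:deterministic_adversary2}: maintain the exact frequency vector on its support, round $f$'s value to a multiplicative net of size $O(\eps^{-1}\log\alpha)$, and observe that a deterministic memoryless adversary is a fixed look-up table on the set of possible estimates (times $2^k$ states for persistent memory), so at most that many distinct items ever enter the stream. The only cosmetic deviation is that you round to the \emph{closest} grid point rather than rounding \emph{up} as the paper does to meet its one-sided $x \le y < (1+\eps)x$ convention; this is immaterial, as the paper itself notes the one-sided and two-sided notions of approximation are interchangeable via a standard shift.
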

We note that the above theorem poses almost no constraints on the function $f$---that is, essentially \emph{any} reasonable function of the frequency vector (beyond just specific special cases, such as $F_p$--moments) can be efficiently and deterministically approximated against a deterministic adversary. These results are proved in Section~\ref{sec:deterministic_adv}.

\paragraph{Memoryless adversaries can generate high flip--number, high density streams.}
The next natural question is whether it is possible to design adversarially robust algorithms against \emph{randomized} low--memory adversaries. Before proving the positive results on this front, we demonstrate that the class of streams generated by these adversaries is sufficiently interesting and non-trivial to resist a black--box application of existing robust streaming frameworks in the literature. Specifically, we prove that such an adversary can generate streams that have high \emph{density}\footnote{The density of the stream (or more precisely, of the frequency vector of the stream) at any point during the process is the number of non-zero entries in the frequency vector at that time. See Definition \ref{def:sparsity}.} and large \emph{flip number},\footnote{The flip number is roughly the number of times the underlying value of $f$ over the stream changes by a constant multiplicative factor, see Definition \ref{def:flip_number}.} in the case where $f$ is $F_2$, the second moment of the frequency vector.

\begin{theorem}[Low--memory stream with high density and large flip number; informal version of Theorem~\ref{thm:lb_construction}]
Fix any constant $0 < c < 1$.
There exists a randomized memoryless adversary for $F_2$--estimation that, conditioned on the algorithm always outputting a $(1+\eps)$--approximation in the specific instance of the adversarial game, generates a stream of length $m$ whose density is $\Omega(m^c)$, for all but the first $m^c$ rounds of the stream, and whose flip number is $\Omega(m^{1-c})$.

In addition, if we provide the adversary with a single bit of persistent memory, then the above statement holds, for any constant $0 < c < 1$, with flip number $\Omega(m^{1-0.5 c})$.
\end{theorem}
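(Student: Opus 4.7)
I would exhibit a two--phase construction in which the current phase is selected implicitly through a threshold test on the most recent estimate $y$; all ``memory'' beyond that threshold test is replaced by fresh random coins in the memoryless version, and by a single direction bit in the $1$--bit version.

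Fix a universe $[n]$ with $n = m^{3c}$ and a pivot $p := 1 \in [n]$. In Phase~1, triggered when $y < \tfrac{1}{3} m^c$, the adversary inserts a uniformly random element of $[n]$; a birthday--type union bound shows that the first $m^c$ insertions are pairwise distinct with probability $1-o(1)$, so after $\Theta(m^c)$ rounds the frequency vector has density exactly $m^c$, every nonzero coordinate equals $1$, and $F_2 = m^c$. In Phase~2, triggered when $y \ge \tfrac{1}{3} m^c$, the memoryless adversary ignores $y$ entirely and at every round flips a fresh fair coin, inserting $p$ on heads and deleting $p$ on tails. The $1$--bit adversary instead uses its bit to record a direction (``insert'' or ``delete''), flipping the bit only when $y$ crosses one of two well--separated thresholds near $m^c$ and $2m^c$, which forces $f_p$ to undergo clean monotone sweeps.

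The density lower bound is almost immediate: Phase~2 never touches any item other than $p$, so from the end of Phase~1 onward the density lies in $\{m^c-1,\,m^c,\,m^c+1\}$. For the flip number, write $F_2(t) = (m^c - 1) + X_t^2$ with $X_t := f_p$; then $F_2$ changes by a constant factor around $m^c$ exactly when $|X_t|$ crosses the level $L := \lceil \sqrt{m^c}\,\rceil = \Theta(m^{c/2})$. In the memoryless case $(X_t)$ is a simple random walk on $\Z$ that is independent of the algorithm's randomness, so the number of upcrossings of $[0, L]$ in the $\Theta(m)$ rounds of Phase~2 is $\Theta(m / L^2) = \Theta(m^{1-c})$ by the standard round--trip argument (each excursion from $0$ past $L$ and back takes $\Theta(L^2)$ steps on average). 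In the $1$--bit case the direction bit prevents mid--sweep reversals, so $X_t$ moves monotonically between $0$ and $\Theta(L) = \Theta(m^{c/2})$ in one sweep, yielding $\Theta(m / L) = \Theta(m^{1 - 0.5c})$ flips.

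The main subtlety is the conditioning on the event $E$ that the algorithm outputs a $(1+\eps)$--approximation at every round: conditioning on $E$ could a~priori bias the adversary's coin flips and break the random walk calculation. In the memoryless case this is benign because the Phase~2 coins are drawn from fresh randomness and the Phase~2 action does not depend on $y$, so the Phase~2 stream is an unbiased SRW regardless of the algorithm's realization; $E$ only affects the single random time at which Phase~2 starts. For the $1$--bit variant one must additionally choose the two direction--flip thresholds with a multiplicative gap larger than $(1+\eps)^2$, so that the $(1+\eps)$ slack in $y$ never triggers a spurious direction flip mid--sweep. Verifying these details---together with ensuring that the birthday argument in Phase~1 survives the conditioning on $E$---is the main technical work I would expect.
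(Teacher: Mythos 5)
Your high--level two--phase structure (random fill to reach density $m^c$, then bounce a single heavy hitter $p$) is exactly the paper's, and your $1$--bit variant with monotone sweeps between thresholds near $m^c$ and $2m^c$ is morally the same as the paper's \textbf{Type II / Deletion / Follow--the--sign} dynamics. But the memoryless variant, which is the core of the theorem, has a genuine gap.

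The problem is your claim that ``the memoryless adversary ignores $y$ entirely'' in Phase~2, so that $X_t = f_p$ is a \emph{free} simple random walk. Once $X_t$ is an unconfined SRW, it typically wanders to $|X_t| = \Theta(\sqrt{m})$, at which point $F_2 = (m^c - 1) + X_t^2 = \Theta(m)$. In that regime, a single step changes $F_2$ by $O(\sqrt m)$ out of $\Theta(m)$, i.e.\ a \emph{relative} change of $O(1/\sqrt m)$, so a $(1+\eps)$--flip now requires $\Theta(\eps^2 m)$ steps. In effect the walk spends nearly all of Phase~2 far from the origin, where flips are extremely rare, and the occupation time in the useful band $|X_t| \lesssim m^{c/2}$ is only $\Theta(m^{(1+c)/2})$, not $\Theta(m)$. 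A more careful accounting (total variation of $\log F_2$ along the trajectory) shows that your construction yields a flip number of order $\tilde\Theta(\sqrt m)$, which for the relevant range $c < 1/2$ is strictly weaker than the claimed $\Omega(m^{1-c})$. The paper avoids this precisely by \emph{not} having the adversary ignore $y$: its memoryless strategy still contains a deterministic \textbf{Deletion} rule (``if the estimate is at least $(1+\eps)^4 m^c$, delete $p$''), which acts as a reflecting barrier confining $X_t$ to $O(m^{c/2})$. With $F_2$ pinned near $m^c$, every $\Theta(m^c)$ rounds produces a $(1+\eps)$--change with constant probability, and that is what drives the $\Omega(m^{1-c})$ count. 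So the confinement is not an optional simplification; without it the flip bound fails.

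A secondary but real error: your ``round--trip argument (each excursion from $0$ past $L$ and back takes $\Theta(L^2)$ steps on average)'' is false for SRW on $\mathbb Z$ --- the expected return time to the origin, and the expected hitting time of any fixed level, are both infinite. The correct tool, and the one the paper uses (its Lemma~\ref{lem:1D_rand_walk}), is a probability--$1/2$ hitting--time bound: with probability $\ge 1/2$ the walk hits $\pm L$ within $\beta L^2$ steps. One then tiles $[0,m]$ by windows of length $\Theta(L^2)$, argues a flip occurs in each window with probability $\ge 1/2$ \emph{independently of the past} (this is what neutralizes the conditioning on the algorithm being correct, together with a stochastic--domination/coupling step showing the actual Phase~2 process dominates an unbiased SRW), and finishes with a Chernoff bound. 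Your discussion of the conditioning-on-correctness issue is on the right track, but it would have to be adapted to the confined walk, where Phase~2 actions \emph{do} depend on $y$ and the domination argument is what replaces your ``independence from $y$'' shortcut.

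One small item you should also flag: you let Phase~1 draw from all of $[n]$ including $p = 1$. If a Phase~1 sample ever lands on $p$, a later Phase~2 deletion no longer cleanly zeroes out $f_p$. The paper draws Type~I elements from $\{2,\dots,n\}$, reserving item~$1$ for Phase~2. This is a minor fix but worth making explicit.
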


Again the result here is rather general in a certain interesting sense. While
the result applies only to the special case of $f = F_2$, it only concerns the
adversary and does not assume anything specific about the behavior of the
algorithm, aside from requiring that it provides good estimates
throughout the stream.
In particular, it also applies to algorithms that can memorize the entire stream.

We note that it is easy to construct a stream with flip number $\Omega(m)$,
but density $O(1)$, using a deterministic memoryless adversary: simply take an
adversary who when the algorithm outputs zero, inserts the element, say, $1$;
and when the algorithm outputs any value bigger than zero, the adversary deletes
the same element. It is also easy to construct a stream with flip number of
order $\Theta(\log m)$ and density $\Omega(m)$, this time with a randomized memoryless
adversary (provided the underlying universe is large enough): simply insert a
random element from the universe in each round. However, each of these examples
in itself is not very interesting and can be easily handled by at least one of
the existing generic black--box frameworks for robust streaming. But as it turns
out, these two simple constructions can be interwoven together (by a randomized
memoryless adversary) to produce a stream with both high (polynomial in $m$)
flip number and high density.

The state of the art robust streaming algorithms for $F_2$--estimation against
\emph{any} black--box adversary (not just bounded adversaries as in this paper)
have space complexity $\tilde{\Theta}(m^{0.4})$ and are based on dense--sparse
tradeoffs. The earlier and simpler of them, by Ben-Eliezer, Eden, and Onak
\cite{BEO22}, has space complexity that scales with the maximum between the
sparsity and the square root of the flip number for the non-sparse part of the
stream. The more recent one, due to Woodruff and Zhou~\cite{WZ2024}, is more
intricate, attains the same space complexity for $F_2$--estimation (and
improved complexity for $F_p$ with $p \in(1,2)$) but is better tailored for
streams in which the dense part involves insertions and deletions on a small set
of heavy hitters. Informally speaking, our construction toward the proof of
Theorem~\ref{thm:lb_construction} appears to pose a barrier for the algorithm of
Ben-Eliezer et al.\ (and more generally any algorithm whose complexity depends
as a black box on the sparsity of the stream and the flip number of the dense
part), but the Woodruff--Zhou algorithm appears better tailored for processing
such a stream efficiently since the dense part in our construction involves
repeated insertions and deletions of the same single element. Nevertheless, the
fact that memoryless adversaries can generate streams with both large flip
number and large density provides a strong indication that such a model, albeit
seemingly weak at first glance, is interesting and strong enough to fool many of
the existing approaches in the literature. We leave open the question of whether
the Woodruff--Zhou algorithm (or variants thereof) can solve $F_2$--estimation
against memoryless adversaries with space complexity polylogarithmic in $m$ (see
Section \ref{sec:open_problems}). Even if this is the case, our algorithms
have the advantage of being significantly simpler.

\paragraph{Handling memory--bounded randomized adversaries.}
We now go beyond the deterministic adversary case and turn our attention to
randomized adversaries. Our main theorem is the following.

\begin{theorem}[Randomized adversaries; informal version of Theorem
\ref{thm:robust_low_memory}]
Suppose there exists an order--independent oblivious streaming algorithm that
uses $M(\eps)$ space to compute a $(1+\epsilon)$--approximation
to a function $f$ with range $\set{0} \cup [1,\alpha]$
with probability at least $9/10$.
Then, there exists a corresponding robust algorithm
with space $M(\epsilon/3) \cdot O\left(\frac{\log
\alpha}{\eps}\cdot \log m
\right)$ that tracks the value of $f$
up to a factor of $1+\epsilon$ against a memoryless randomized adversary,
with probability at least $9/10$ over the entire stream.
If the adversary has $k$ bits of persistent memory,
the space bound increases to $M(\epsilon/3) \cdot O\left(2^k \cdot \frac{\log
\alpha}{\eps}\cdot \log m
\right)$.
\end{theorem}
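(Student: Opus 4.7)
The plan is to combine the output--rounding mechanism of Theorem~\ref{thm:det_informal} with one independent, amplified copy of the given oblivious sketch per possible rounded output value. I would round every estimate to the nearest element of $\set{0}\cup\set{(1+\eps/3)^i : 0\leq i\leq L}$ where $L = O(\log\alpha/\eps)$; composed with a $(1+\eps/3)$--approximate oblivious sketch, this yields a $(1+\eps)$--approximation to $f$. For each rounded value $r_j$, I would instantiate an independent copy $\Sigma_j$ of the oblivious algorithm, amplified via median--of--$O(\log m)$ to drive its per--query failure probability below $1/(10Lm)$; the total space is then $M(\eps/3)\cdot O((\log\alpha/\eps)\cdot \log m)$. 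All $L+1$ copies ingest every stream update (well--defined by order--invariance). At each round, the algorithm queries the copy $\Sigma_{j^\star}$ associated with the latest rounded output, rounds the resulting estimate, outputs it, and updates $j^\star$ accordingly.

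For the analysis, fix the adversary's randomness $R$ and introduce a ghost process that computes the exact rounded value $y^\star_t$ via an oracle at each round and feeds the adversary the corresponding update $u^\star_t = \Adv(y^\star_{t-1}, r_t)$; this yields a deterministic frequency--vector trajectory depending only on $R$. Call the execution \emph{consistent through round} $t$ if the algorithm's output equals $y^\star_s$ for every $s\leq t$. The key inductive step is immediate: given consistency through $t-1$, the frequency vector at round $t$ equals the ghost's, so if $\Sigma_{j^\star_t}$ returns a $(1+\eps/3)$--approximation, order--invariance and rounding force the algorithm's output to equal $y^\star_t$.

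To conclude, let $\tau$ be the first round at which consistency fails, and decompose $\set{\tau=t,\, j^\star_\tau=j}$ over $(t,j)\in [m]\times [L]$. Under this event, consistency holds through round $t-1$, so the query that $\Sigma_j$ must answer at round $t$ is a fixed frequency vector $q_t^{(R)}$ depending only on $R$ and $t$, and in particular independent of $\Sigma_j$'s randomness. Obliviousness of $\Sigma_j$ on this fixed query then bounds its failure probability by $1/(10Lm)$, and a union bound over $[m]\times [L]$ gives overall failure at most $1/10$. For the $k$--bit extension, the effective adversary state space enlarges to $L\cdot 2^k$ possible (last output, persistent memory)--pairs; I would maintain one amplified oblivious sketch per state, have the algorithm branch over the $2^k$ possible memory trajectories consistent with the observed output history, and run the same first--failure analysis with an $L\cdot 2^k\cdot m$--way union bound, yielding the stated $M(\eps/3)\cdot O(2^k(\log\alpha/\eps)\log m)$ bound.

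The main technical obstacle is the delicate independence step in paragraph three: one must show rigorously that, under consistency through $\tau - 1$ and conditional on $R$, the query at round $\tau$ is truly a deterministic function of $R$, not of $\Sigma_j$'s internal randomness. This requires combining the memoryless property (which factorizes the update distribution by the last output alone), the order--invariance of the oblivious algorithm (so $\Sigma_j$'s response depends only on the frequency vector and its own randomness), and the ghost--process coupling, all while avoiding the common pitfall that conditioning on the first--failure event implicitly conditions on $\Sigma_j$'s coins. The bounded--memory extension further requires the algorithm to track all $2^k$ possible memory trajectories consistent with the output history without collapsing the per--query independence, which is the true source of the extra $2^k$ factor in the space bound.
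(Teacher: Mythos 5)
Your plan and the paper's proof share the first move---round the oblivious algorithm's output to a multiplicative net $\mathcal N_\star$ of size $L=O(\eps^{-1}\log\alpha)$ so that the adversary only ever sees one of $L\cdot 2^k$ states---but from there they diverge, and your ghost--process coupling has a genuine gap in the ``key inductive step.''

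The step you call ``immediate'' is actually false: consistency through round $t-1$ plus a \emph{successful} (i.e., $(1+\eps/3)$--accurate) query at round $t$ does \emph{not} force the algorithm's output to equal the ghost value $y^\star_t$. A correct raw estimate $y$ can lie anywhere in $[y_\star,(1+\eps/3)y_\star)$, and this interval in general straddles two cells of a net with multiplicative spacing $(1+\eps/3)$ (e.g.\ if $y_\star$ itself is just below a net point). Hence the rounded output is a function of $\Sigma_j$'s internal coins, not of $y_\star$ alone, and consistency can break at round $t$ even though every sketch queried so far has succeeded. Consequently $\set{\tau=t,\,j^\star_\tau=j}$ is \emph{not} contained in the event ``$\Sigma_j$ errs on $q_t^{(R)}$,'' and the union bound $\Pr[\tau\leq m]\leq\sum_{t,j}\Pr[B_{t,j}]$ does not go through; the failure event you are trying to control is not dominated by sketch errors at all. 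Changing the rounding rule (nearest, round--down, etc.)\ does not help, because the issue is the width of the correct--answer interval relative to the net spacing, not the tie--breaking convention. You flagged the independence of $q_t^{(R)}$ from $\Sigma_j$'s coins as the ``main technical obstacle,'' but that part is actually fine; the real problem is that the ghost trajectory is not deterministic in $R$ once the rounded output is allowed to depend on the sketch's coins.

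The paper's proof sidesteps exactly this non--determinism. After rounding, it observes that the adversary's update distribution is a function only of its $(\text{estimate},\text{persistent memory})$ state, of which there are $\tau=O(2^k\eps^{-1}\log\alpha)$; one can therefore pre--generate one random stream per state and view the actual stream as an adversarial merge of prefixes of those $\tau$ streams (Definition~\ref{def:k_stream_adversary}). Order--invariance collapses all merge orders for a fixed choice of prefixes, so there are only $m^\tau$ distinct frequency vectors the algorithm can ever be queried on, and amplifying the oblivious sketch to error probability $\delta/m^\tau$ and union--bounding over those $m^\tau$ possibilities (Lemma~\ref{lem:k_stream}) handles \emph{every} output sequence the algorithm might emit, not just one ghost sequence. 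This is precisely what absorbs the ambiguity your coupling argument cannot handle, and it yields the same $M(\eps/3)\cdot O(2^k\eps^{-1}\log\alpha\cdot\log m)$ bound you were aiming for. If you want to keep the per--state--sketch structure of your construction, the way to repair the analysis is to replace the single ghost trajectory with the paper's prefix--union--bound: what you must union over is not ``which round first fails'' but ``which multiset of updates from the $\tau$ per--state streams has been consumed.''
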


To prove the above theorem, we first introduce a new adversarial model, which we
call the $\tau$--Stream Adversary Model, formally introduced in
Definition~\ref{def:k_stream_adversary}.
This adversarial model, parameterized by an integer
$\tau \ge 1$, is very much similar to our main bouded--memory model (see Definition \ref{def:robust_streaming}),
but we allow for it to have unbounded
persistent memory, and it can thus memorize the entire interaction history with
the streaming algorithm.
To counter its power, we only allow it to
switch (arbitrarily) between $\tau$ streams of pre-generated updates as it
interacts with the streaming algorithm. For this adversary, we show how to turn
a correct order--independent oblivious streaming algorithm into one which can
handle the $\tau$--Stream Adversary Model at the cost of, roughly, a \emph{linear} in
$\tau$ blow-up in space. At a high level, the order--independent property allows
us to union bound over the selection of prefixes of the initial $\tau$ streams,
of which there are only $O(m^{\tau})$. Using the standard median of estimates trick,
this implies that our overall space dependence is logarithmic in the number of options,
resulting in the claimed linear in $\tau$ blow-up. Note that this is reminiscent
of the computational paths technique \cite{advrob}. See  Lemma
\ref{lem:k_stream} for a formal argument.

This intermediate $\tau$--Stream Adversary Model is useful because we can reduce the
randomized adversary in the bounded--memory model to a $\tau$--Stream Adversary
for an appropriate value of $\tau$. More formally, the proof of Theorem
\ref{thm:robust_low_memory} gives a reduction from the randomized adversary
with $k$ bits of persistent memory to the $\tau$--Stream Adversary Model for
$\tau = O(2^k \eps^{-1} \log \alpha)$. At a high level, this comes from the fact
that the persistent memory can take on at most $2^k$ different states and the
total number of ``different'' estimates of the streaming algorithm can be assumed
to be $O(\eps^{-1} \log \alpha)$, via a rounding argument (see Lemma
\ref{lem:net}). Hence, we can view the adversary
with $k$ bits of persistent memory as interleaving between
$\tau = O(2^k \eps^{-1} \log \alpha)$ many pre-defined streams. See the proof of
Theorem \ref{thm:robust_low_memory} for details. Lastly, we remark that Theorem
\ref{thm:robust_low_memory} has downstream corollaries to order--independent
problems, including moment estimation. See Section~\ref{sec:applications} for
details.

\section{Preliminaries}\label{sec:prelims}

For any $k \in \mathbb N$, we write $[k]\eqdef \set{1,2,\ldots,k}$ to denote the set of the $k$ smallest positive
integers.

We focus on problems in which the exact solution is a single number.
Since we consider these problems in the streaming setting, it is usually not possible
to solve them exactly in small space and we aim to solve them approximately, staying within a multiplicative
factor close to 1 within the exact solution. This notion of approximation is captured
in the following definition.

\begin{definition}[Multiplicative approximation]
Let $\eps,x,y \in [0,\infty)$. We say that $y$ is a \emph{$(1+\eps)$--multiplicative approximation to
$x$} if $x \le y < (1+\eps)x$.
\end{definition}

Note that we used here for convenience a one--sided notion of approximation,
where we do not allow the approximate value $y$ to be smaller than $x$. However,
by standard shifting arguments, all results in this paper would hold, with small adjustments, if we
allowed for error in both directions, i.e., if we required that
$x / (1+\eps) < y < (1+\eps) x$ (and/or if we replace the strict inequality signs with $\leq$).

We now introduce the general form of problems in which we are interested.

\begin{definition}[Estimation problems]
\label{def:estimation_problems}
For a given function $f: \mathbb Z^n \to \set{0} \cup [1, \alpha]$ and parameter $\eps > 0$,
we refer to the task of computing a $(1+\eps)$--multiplicative approximation
to $f(v)$ on a given input $v \in \mathbb Z^n$ as the
\emph{$(f,\eps)$--estimation problem}. (To simplify our presentation, we require that $\alpha \ge 2$
so that $\log \alpha \ge \log 2 > 0$, i.e., $\log \alpha$ is not less than some positive constant.)
\end{definition}

We usually do not have an explicitly given vector $v$ that is an input
to an $(f,\eps)$--estimation problem and instead we work
with a stream of updates to the initial all--zero vector.

\begin{definition}[Frequency vectors and streams of updates]\label{def:streams}
We refer to inputs of our estimation problems as \emph{frequency vectors}.
For a given vector $v \in \mathbb Z^n$, $v_i$ is the \emph{frequency}
or \emph{number of occurrences} of item $i \in [n]$. (Note that we allow frequencies
to be negative.)

A frequency vector $v \in \mathbb Z^n$ can be defined via a \emph{stream of updates},
which is a sequence of $t \in \mathbb N$ pairs of the form
$(i_j,\Delta_j) \in [n] \times \set{-1,1}$ for $j \in [t]$.
The first element of the pair, $i_j$, describes the element whose frequency
is being updated. The second element describes whether the update
decreases (when $\Delta_j = -1$) or increases (when $\Delta_j = 1$) the frequency of $i_j$.
We refer to pairs $(i_j,\Delta_j)$ as updates.

A stream of updates of this form---or, in fact, any prefix of it---accumulates to a frequency
vector. For a stream of $t$ updates as above, for any $j \in \set{0}\cup[t]$,
we write $v^{(j)}$ to denote the frequency vector in $\mathbb Z^n$ obtained
by applying the first $j$ updates to the all--zero vector. Formally,
for all $i \in [n]$,
\begin{equation}\label{eq:frequency_vector}
    v^{(j)}_i \eqdef \sum_{\substack{1\le j' \le j\\i_{j'} = i}} \Delta_{j'}.
\end{equation}
\end{definition}

We now define (a version of) the standard streaming model in which the stream is fixed in advance.

\begin{definition}[``Oblivious'' streaming]\label{def:obliv_stream_alg}
A \emph{streaming algorithm in the oblivious adversary model} (or, for short, \emph{oblivious streaming algorithm})
for an $(f,\eps)$--estimation problem is given the parameter $\eps$ and an upper bound $m$ on the length
of the stream. Then, after receiving one by one $t \le m$ updates $(i_j, \Delta_j)$ for $j \in [t]$,
it produces a solution to the $(f,\eps)$--estimation problem on $v^{(t)}$,
which is the cumulative frequency vector defined in Definition~\ref{def:streams}.
\end{definition}

We now introduce the notion of \emph{order--invariant} oblivious streaming algorithms. Intuitively, the definition is relevant to problems in which the current value of $f$ on the underlying stream
depends only on the current frequency vector---$v^{(j)}$ at round $j$---and not on the \emph{order} of updates that accumulated to $v^{(j)}$. We note that similar concepts have been defined before---e.g., ``path independent automatons" \cite{li2014turnstile}---but we state the following version for completeness. Order--invariant algorithms are central in the streaming literature, since linear sketches are order--invariant.

\begin{definition}[Order invariant streaming algorithms \cite{li2014turnstile}]
\label{def:order_invariant}
Let $\alg$ be an oblivious streaming algorithm for an $(f,\eps)$--estimation problem
that processes streams
$S = ((i_1,\Delta_i),\ldots,(i_t,\Delta_t))$
of at most $t \le m$ updates.
Furthermore, suppose that $\alg$ uses an internal random seed $\rho \in \{0,1\}^{*}$
and that its memory after processing an entire stream $S$ of updates for a random seed $\rho$
is $\mathcal M(S,\rho)$. Let $v(S)$ denote the frequency vector that is the result
of accumulating all updates in $S$. (This corresponds to $v^{(t)}$ in Definition~\ref{def:streams}
for the given stream of updates of length $t$).

We say that $\alg$ is \emph{order invariant} if for any pair $S_1$ and $S_2$ of streams of length at most $m$
such that $v(S_1) = v(S_2)$ and any possible random seed $\rho$, we have $\mathcal M(S_1,\rho) = \mathcal M(S_2, \rho)$.
\end{definition}

We conclude by formally defining two parameters of streams, \emph{flip number} and \emph{density}, which have both proved important in the robust streaming literature. Indeed, most of the existing robust streaming frameworks have space complexity which depends on the flip number, and, in some cases, the density as well.

\begin{definition}[Flip number \cite{advrob}]
\label{def:flip_number}
Let $\eps \ge 0$, $m \in \mathbb{N}$, and $\bar{y} = (y_0, y_1, \ldots, y_m)$ be any sequence of real numbers. The $\eps$--flip number of $\bar{y}$ is the maximum $k \in \mathbb{N}$ such that there exists $0 \le i_1 < \ldots < i_k \le m$ so that $y_{i_j-1} \not \in
[(1-\eps)y_{i_j},(1+\eps)y_{i_j}]$
for every $j = 2, 3, \ldots, k$.
\end{definition}
Note that in our paper, the value of $\eps$ when defining the flip number will be clear from context.

\begin{definition}[Density/sparsity of a stream]
\label{def:sparsity}
For a vector $x \in \R^n$, we let $\|x\|_0 \eqdef |\{x_i \mid x_i \ne 0 \}|$. We refer to this quantity
as \emph{the density} (sometimes also referred to as the \emph{sparsity}) of vector $x$.

The \emph{density} (or \emph{sparsity}) of a stream of updates at time $j$ is the density of the frequency vector $v^{(j)}$, where $v^{(j)}$ is as defined in Definition~\ref{def:streams}.
\end{definition}

\section{The \BMAM}
\label{sec:model_definition}

Our adversarial streaming setting is formalized in the following definitions. It
builds on and is inspired by the definition of robust streaming,
which was introduced by Ben-Eliezer, Jayaram, Woodruff, and Yogev~\cite{advrob}.
The main difference here is that we carefully restrict the adversary's memory.

The model describes a game between two players, an adversary, which we often call $\Adv$, and a streaming algorithm, which we often call $\Alg$,
over $m$ rounds. For every round $j \in [m]$, the goal of $\Alg$
is to output an estimate $y_j$ to the value of a fixed function $f$
(i.e., it is a \emph{tracking} streaming algorithm, as opposed
to the algorithm in Definition~\ref{def:obliv_stream_alg},
which only outputs an estimate once, at the end of the stream).

Formally, we fix an integer $\alpha \ge 2$, let $\eps > 0$ be an approximation parameter, and let 
function $f: \mathbb{Z}^n \to \{0\} \cup [1, \alpha]$ be a function of the underlying frequency vector as defined in Equation~\eqref{eq:frequency_vector}. As a concrete example, $f$ could output the $F_p$--moment of the frequencies or the number of distinct elements. We assume both parties know $f$, $\alpha$, and $\eps$. The \emph{adversarial streaming model} is formalized as follows:

\begin{definition}[\BMAM]
\label{def:robust_streaming} The \BMAM{} is a game between two players, an
$\Adv = \Adv(r, k, \mathcal{P}, \mathcal{E}, \mathcal{W})$ and an $\Alg$.  $\mathcal{P}, \mathcal{E}$, and $\mathcal{W}$
are different types of memory accessible to $\Adv$, defined shortly. Initially, they are all blank. 
The game proceeds in $m$ rounds, corresponding to
values of $j \in [m]$ from $1$ through $m$, according
to the following rules.
For $j \ge 2$:
     \begin{itemize}
     \item $\Adv$ draws fresh randomness $r_j \in \{0,1\}^*$ with read--only access. After computing the update $(i_j, \Delta_j)$ (see next bullet), the memory containing $r_j$ is made blank.
    \item $\Adv$ computes an update $(i_j, \Delta_j) \in [n] \times \set{-1,1}$, which depends
    only on the randomness $r_j$, a \emph{persistent memory} $\mathcal{P}$ of at most $k$ bits, and a read--only \emph{estimate memory} $\mathcal{E}$ containing an estimate $y_{j-1} \in \{0, 1\}^*$. $\Adv$ has both read and write access to $\mathcal{P}$ (and, in particular, can write to $\mathcal{P}$ during its computation of $(i_j, \Delta_j)$),
    but only read access to $\mathcal{E}$. When computing $(i_j, \Delta_j)$, $\Adv$ also has read and write access to an unbounded \emph{working memory} $\mathcal{W}$. After computing update $(i_j, \Delta_j)$, the \emph{working memory} $\mathcal{W}$ is made blank.
    \item $\Alg$ receives $(i_j, \Delta_j)$, updates its internal memory state, and produces an estimate $y_j$.
    \item The \emph{estimate memory} $\mathcal{E}$ of $\Adv$ is then overwritten with $y_j$.
\end{itemize}
For $j = 1$, the update $(i_j, \Delta_j)$ is computed analogously, but it does not depend on any previous estimates (as there are none).

Let $v^{(j)}$, for $j \in [m]$, be the frequency vector accumulating the first $j$ updates as in Definition~\ref{def:streams}.
If in all rounds $j \in [m]$,
$y_j$ is a correct solution to the $(f,\eps)$--estimation problem
on $v^{(j)}$, then we say that $\Alg$ succeeds.
Otherwise, $\Alg$ fails.
Finally, we say that $\Adv$ is \emph{deterministic} if the random seeds are always set to the empty string.
\end{definition}

Note that the above definition only specifies the restrictions on $\Adv$ and does not explicitly restrict the space used by $\Alg$.
Our theorems provide restrictions on the space required by $\Alg$, which is the main computational resource optimized in streaming algorithms.

Recall that the estimate memory $\mathcal{E}$ models the fact that the adversary has access to the previous output of $\Alg$. This easily extends to the case that $\Adv$ has access to any fixed number $t$ of previous outputs by specifying that the estimate memory $\mathcal{E}$ contains the last $t$ estimates output by $\Alg$. All results proved in this paper extend in a straightforward manner to this setting.

\begin{remark}[Discussion of randomness: fresh vs.\ persistent]
\label{remark:randomness}
Note that our model lets the adversary access fresh randomness in each round. Since the adversary has limited amount
of memory, storing outcomes of coin tosses across several rounds may not be possible. A natural question
in this setting is whether access to a persistent infinite random string, drawn before the interaction with the algorithm, could
make the adversary more powerful and more difficult to defend against. After all, the random string could, for instance,
be used to simulate a random oracle, which could then provide a fully independent hash function with no extra storage needed.

It turns out that this would \emph{not} make designing adversarially robust
algorithms more difficult. Our goal is to provide robust algorithms that output
correct estimates throughout the entire stream with probability at least
$1-\delta$, where $\delta$ is a parameter selected by the user of the algorithm.
It is not difficult to show---as we do in
Lemma~\ref{lem:persistent_randomness}---that if there is an adversary who
succeeds at breaking the estimates with probability greater than $\delta$, using
persistent randomness, a successful configuration of persistent randomness can be
baked into its internal logic, which makes it an adversary in our model, with no
persistent randomness.

Conversely, one could ask whether fresh randomness is needed, because it may
seem that the initial persistent randomness could be used to simulate it. This
would be true if the adversary had enough memory to keep track of which bits of
persistent randomness have already been used. For instance, this could be
achieved by maintaining the pointer to the first unused bit of randomness. Our
results are meaningful only if the adversary has access to a very small amount
of persistent memory.
In this setting, after the persistent randomness is
fixed, the adversary generates the same update for a given estimate
in the same state of memory. This makes it a deterministic adversary,
which is easier to analyze, because we can force it to produce
a low--sparsity frequency vector by rounding the algorithm's estimates.
More details on how this is achieved are provided when we discuss
our results for deterministic adversaries.
\end{remark}

\section{Deterministic Adversaries}
\label{sec:deterministic_adv}
In this section we prove our robust streaming result against deterministic bounded--memory adversaries. 
First, the following lemma shows that the possible outputs of $f$ admit a small ``$\eps$--net'' with respect to multiplicative approximation.
We defer its straightforward proof to the appendix.

\begin{lemma}\label{lem:net}
    Consider the interval $[1, \alpha]$
    for an integer $\alpha$ that satisfies $\alpha \ge 2$. Let  $\eps \in (0,1)$. There exists a subset $\mathcal{N} \subseteq [1, \alpha]$ of size $O(\eps^{-1}\log\alpha)$ with the property that
    \[\forall x \in [1, \alpha], \quad \exists y \in \mathcal{N} \text{ satisfying } x \le y < (1+\eps)x. \]
\end{lemma}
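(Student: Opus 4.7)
The plan is to use a geometric $(1+\eps)$-progression as the net, anchored at $1$ and capped at $\alpha$. Specifically, I would set
\[
\mathcal{N} \eqdef \set{(1+\eps)^i : i = 0, 1, \ldots, k-1} \cup \set{\alpha},
\]
where $k \eqdef \lceil \log_{1+\eps} \alpha \rceil$ is the smallest integer with $(1+\eps)^k \ge \alpha$. The cap at $\alpha$ ensures $\mathcal{N} \subseteq [1,\alpha]$ as required by the statement.

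To verify the covering property, fix any $x \in [1,\alpha]$ and let $i^\star$ be the smallest nonnegative integer with $(1+\eps)^{i^\star} \ge x$. If $i^\star \le k-1$, take $y = (1+\eps)^{i^\star}$; otherwise take $y = \alpha$. In both cases $y \ge x$ by construction. For the strict upper bound, if $i^\star = 0$ then $x = 1 = y$ and $y < (1+\eps) x$ holds since $\eps > 0$. If $i^\star \ge 1$, then minimality gives $(1+\eps)^{i^\star - 1} < x$, hence $y \le (1+\eps)^{i^\star} < (1+\eps)\cdot x$ (and in the capped case, $y = \alpha \le (1+\eps)^{i^\star} < (1+\eps)x$ as well).

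To bound the size, note that $|\mathcal{N}| \le k+1$, so it suffices to show $k = O(\eps^{-1} \log \alpha)$. Using the standard inequality $\ln(1+\eps) \ge \eps/2$ for $\eps \in (0,1)$, we get
\[
k \;=\; \left\lceil \frac{\ln \alpha}{\ln(1+\eps)} \right\rceil \;\le\; \left\lceil \frac{2 \ln \alpha}{\eps} \right\rceil \;=\; O\!\left(\frac{\log \alpha}{\eps}\right),
\]
which yields the claimed bound.

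There is essentially no obstacle here; the only minor subtlety is the boundary issue near $\alpha$, which is handled by including $\alpha$ itself in the net so that $\mathcal{N}$ remains a subset of $[1,\alpha]$ while still covering values of $x$ that lie above the largest power $(1+\eps)^{k-1}$.
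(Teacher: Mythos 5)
Your proof is correct and follows essentially the same construction as the paper's: a geometric $(1+\eps)$-progression in $[1,\alpha]$ with $\alpha$ appended to handle the boundary. The only difference is that you are slightly more explicit about including $i=0$ and about the case analysis near the cap, but the core argument is identical.
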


Our main result provides efficient streaming algorithms against deterministic memoryless adversaries. It is assumed that we are able to compute $f$ and round its value to the net produced by Lemma \ref{lem:net} using $M(f, \eps)$ bits by just specifying the non-zero coordinates of the input vector.

\begin{theorem}
\label{thm:deterministic_adversary}
Consider the \BMAM{}
(see Definition \ref{def:robust_streaming}) with a deterministic adversary and $k = 0$, i.e.,
no persistent storage. There exists a deterministic streaming algorithm that
always succeeds and uses $O\left(\frac{\log(\alpha)}{\eps} \cdot \log(mn) + M(f, \eps)\right)$ bits of space.
\end{theorem}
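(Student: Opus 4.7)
The plan is to leverage the adversary's memorylessness and determinism, together with output rounding, to force the frequency vector to remain supported on a small, a priori bounded, set of coordinates. Let $\mathcal{N} \subseteq [1,\alpha]$ be the net of size $N = O(\eps^{-1} \log \alpha)$ guaranteed by Lemma~\ref{lem:net}, and let $\mathcal{N}^\star \eqdef \mathcal{N} \cup \{0\}$. The algorithm I would propose maintains an explicit sparse representation of the current frequency vector---a dictionary of (item, frequency) pairs restricted to coordinates $i$ with $v^{(j)}_i \neq 0$---and, after each incoming update, computes $f(v^{(j)})$ exactly from this sparse representation using the assumed $M(f,\eps)$--bit routine, rounds the value up to the smallest element of $\mathcal{N}^\star$ that is at least the true value (or to $0$ if $f(v^{(j)}) = 0$), and outputs this rounded estimate. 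Correctness is immediate from Lemma~\ref{lem:net}: whenever $f(v^{(j)}) \in [1,\alpha]$ the rounded estimate lies in $[f(v^{(j)}), (1+\eps) f(v^{(j)}))$, and whenever $f(v^{(j)}) = 0$ the algorithm outputs $0$, so the algorithm always succeeds.

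The key step is then the sparsity bound on $v^{(j)}$, which will deliver the advertised space bound. Because $\Adv$ is deterministic and memoryless, with its working memory and fresh randomness wiped between rounds and no persistent memory (since $k = 0$), the update $(i_j, \Delta_j)$ at every round $j \geq 2$ is a \emph{fixed} function $g(y_{j-1})$ of the previous estimate alone; round $j = 1$ contributes at most one additional fixed update, produced without any estimate in $\mathcal{E}$. Since the algorithm only ever writes values from $\mathcal{N}^\star$ into $\mathcal{E}$, the set of update pairs ever produced by $\Adv$ is contained in $\{g(y) : y \in \mathcal{N}^\star\} \cup \{(i_1,\Delta_1)\}$, which has at most $|\mathcal{N}^\star| + 1 = O(\eps^{-1} \log \alpha)$ elements. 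Consequently only $O(\eps^{-1} \log \alpha)$ distinct items ever have their frequencies touched, so $\|v^{(j)}\|_0 = O(\eps^{-1} \log \alpha)$ for every $j$. The dictionary therefore stores $O(\eps^{-1} \log \alpha)$ entries, each consisting of an index in $[n]$ and an integer frequency in $[-m,m]$, costing $O\bigl((\log \alpha / \eps) \cdot \log(mn)\bigr)$ bits in total; adding the $M(f,\eps)$ scratch space to evaluate $f$ and perform the rounding yields the stated bound.

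I do not expect any substantive obstacle: the entire argument reduces to the single observation that a deterministic memoryless adversary reacting only to estimates drawn from a finite set of size $N$ can produce at most $N$ distinct updates from round $2$ onwards. The only points requiring a bit of care are (i) maintaining the dictionary in such a way that entries whose frequencies cancel back to $0$ are removed, so that $\|v^{(j)}\|_0$ genuinely bounds the number of items stored, and (ii) treating round $j=1$ as a separate base case in the counting argument, since at that moment $\mathcal{E}$ is empty and $\Adv$'s output is not of the form $g(y_0)$. Both of these are pure bookkeeping and do not affect the asymptotic space bound.
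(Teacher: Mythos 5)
Your proposal matches the paper's own proof essentially step for step: maintain the exact frequency vector sparsely, round outputs to the net from Lemma~\ref{lem:net}, and observe that a deterministic memoryless adversary reacting only to estimates in $\mathcal{N}^\star$ (plus one initial update) can touch at most $O(\eps^{-1}\log\alpha)$ distinct items, yielding the stated space bound. The bookkeeping points you flag (removing zeroed-out entries, handling round $1$ separately) are exactly those the paper also treats; no gaps.
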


\begin{proof}
We construct a deterministic algorithm, $\Alg$, and analyze its correctness and space complexity. Its description is as follows:
\begin{enumerate}
    \item The algorithm maintains a \emph{sparse} representation of the \emph{exact} frequency vector $v$, which equals $v^{(j)}$ at time $j$. More specifically, it explicitly keeps a mapping from
    all coordinates $i \in [n]$ that have received at least one update to their current frequencies. This is possible because the frequencies of all other coordinates equal 0.
    Initially, the frequency vector $v$ is the all zero vector---$v \leftarrow (0,0,\ldots,0)$---so the mapping is empty.

    \item At any round $j \in [m]$, upon receiving an update $(i_j, \Delta_j)$, it simply updates the frequency of coordinate $i_j$ in the mapping it maintains by setting
    it to $v_{i_j} \leftarrow v_{i_j} + \Delta_j$. If the mapping from $i_j$ is not present in the mapping, it is first added with the initial frequency of 0.

    \item To produce estimate $y_j$, $\Alg$ computes and outputs
    a rounded version of the value $f(v^{(j)}) \in  \{0\} \cup [1, \alpha]$.
    This process can be seen as first computing $f(v)$, where $v = v^{(j)}$
    is the current frequency vector, and then rounding it up to the smallest
    element that is at least $f(v)$ in the net $\mathcal N \cup \set{0}$
    given by Lemma~\ref{lem:net}, before outputting it.
    Thus, we output $y_j \in \mathcal N \cup \set{0}$ satisfying $f(v^{(j)}) \le y_j < (1+\eps)f(v^{(j)})$ and $y_j \in \mathcal{N}$.
    We know that this can be achieved,
    using at most $M(f,\eps)$ space.
\end{enumerate}

The correctness of $\Alg$ follows from the fact that it can always calculate the
\emph{exact} value of $f(v^{(j)})$ and round it up to a value $y_j$ in the net
$\mathcal{N} \cup \{0\}$ from Lemma \ref{lem:net}, which by definition
is a $(1+\epsilon)$--multiplicative approximation to $f(v^{(j)})$.
For the space complexity, note that $\Adv$ sees at most $|\mathcal{N}| + 1 = O(\eps^{-1}\log(\alpha))$ unique responses from $\Alg$. Since $\Adv$ is deterministic (and its persistent memory and random string are always blank), it follows that it is only a function of its estimate memory $\mathcal{E}$, which can only take on
at most $|\mathcal N| +1$ different values.
Thus, $|\{i_1, \ldots, i_m \} |$, where $i_j$ is the update of $\Adv$ at round $j$, is bounded by $|\mathcal N| + 1 + 1 = O(\eps^{-1}\log(\alpha))$,
where the additional additive term of $1$ comes from the fact that, initially, the adversary produces a
(potentially different) single update that does not depend on any estimate, because there is no estimate initially.
That is, the total number of unique items that the adversary places in the stream (either for insertions or deletions) is also at most $O(\eps^{-1}\log(\alpha))$. Thus, the total number of coordinates that $\Alg$ ever has to maintain is also bounded by the same quantity, leading to the claimed space bound. Note the $\log m$ term in the $\log (mn) = \log m + \log n$ overhad comes from the fact that a single frequency that $\Alg$ maintains could be any integer value between $-m$ and $m$. The $\log n$ term comes from the fact
that we also need to store the elements in $[n]$ that $\Alg$ has encountered. This completes the proof.
\end{proof}

The proof easily extends to the case of $k > 0$, when $\Adv$ has $k$ bits of persistent memory. In this case, $\Adv$'s outputs (which are the stream updates) are a deterministic function of at most
$O(\eps^{-1}\log\alpha) \cdot 2^k$ different possible values.

\begin{corollary}
\label{thm:deterministic_adversary2}
Consider the \BMAM{} (Definition \ref{def:robust_streaming}) with a deterministic adversary and persistent memory of $k$ bits. There exists a deterministic streaming algorithm that always succeeds and uses $O\left(2^k \cdot \frac{\log(\alpha)}{\eps} \cdot \log(mn) + M(f,\eps)\right)$ bits of space.
\end{corollary}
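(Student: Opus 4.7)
The plan is to run exactly the same algorithm as in the proof of Theorem~\ref{thm:deterministic_adversary} (maintain the exact frequency vector $v$ as a sparse mapping from touched coordinates to their current frequencies, compute $f(v^{(j)})$ exactly at each round, and round it up to the nearest element of $\mathcal N \cup \set{0}$ from Lemma~\ref{lem:net}) and simply re-analyze the sparsity bound when $\Adv$ is allowed $k$ bits of persistent memory. Correctness is immediate from the same argument as before, since the algorithm always outputs a $(1+\eps)$--multiplicative approximation to $f(v^{(j)})$.

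The only real work is to bound the number of distinct items the adversary can possibly touch. The key observation is that, because $\Adv$ is deterministic and has empty fresh randomness, the update $(i_j, \Delta_j)$ produced in round $j$ is a deterministic function of the pair $(y_{j-1}, p_{j-1})$, where $y_{j-1}$ is the contents of the read-only estimate memory $\mathcal E$ at the start of round $j$ and $p_{j-1}$ is the state of the persistent memory $\mathcal P$ at the start of round $j$ (equivalently, its state at the end of round $j-1$; the working memory is wiped between rounds and contributes nothing). By the rounding step of $\Alg$, every $y_j$ that is ever written to $\mathcal E$ lies in $\mathcal N \cup \set{0}$, so $y_{j-1}$ takes at most $|\mathcal N|+1 = O(\eps^{-1}\log \alpha)$ possible values; meanwhile $p_{j-1}$ takes at most $2^k$ possible values. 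Hence the total number of distinct pairs $(y_{j-1}, p_{j-1})$ is at most $O(2^k\cdot \eps^{-1}\log\alpha)$, and consequently the adversary can produce at most this many distinct updates across the entire stream (plus one for the first round, which has no previous estimate).

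Since each distinct item appearing in the stream contributes at most one non-zero coordinate to $v^{(j)}$, the sparse mapping maintained by $\Alg$ at any point has at most $O(2^k\cdot \eps^{-1}\log\alpha)$ entries. Storing each entry costs $O(\log(mn))$ bits ($\log n$ to identify the coordinate and $\log m$ to record its integer frequency in $[-m,m]$), and $\Alg$ additionally uses $M(f,\eps)$ bits to compute and round $f$. Summing these yields the claimed bound of $O\bigl(2^k \cdot \tfrac{\log \alpha}{\eps}\cdot \log(mn) + M(f,\eps)\bigr)$ bits.

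There is essentially no obstacle here beyond correctly identifying the ``effective state'' of the adversary at the start of each round; the mild subtlety is that $\mathcal P$ is read--write within a round, so one must be careful to use its \emph{entry} state $p_{j-1}$ (rather than its exit state) as the argument that, together with $y_{j-1}$, determines $(i_j,\Delta_j)$. Once this is stated, the sparsity bound multiplies by exactly $2^k$ and the rest of the proof of Theorem~\ref{thm:deterministic_adversary} carries over verbatim.
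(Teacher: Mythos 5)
Your proposal is correct and follows essentially the same route as the paper: run the sparse exact-tracking algorithm with net rounding from Theorem~\ref{thm:deterministic_adversary}, and observe that a deterministic adversary's update is a function of the pair (last rounded estimate, persistent-memory state), of which there are at most $O(2^k \cdot \eps^{-1}\log\alpha)$, yielding the same sparsity and hence space bound. Your remark about using the entry state of $\mathcal P$ is a fine clarification but does not change the argument.
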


\section{Memoryless Adversaries Can Generate Non-Trivial Streams}

In this section we prove the following result, regarding the ability of memoryless or extremely low--memory adversaries to generate streams that are both non-sparse and have large flip number.

\begin{theorem}
\label{thm:lb_construction}

Fix $0 < c < 1/2$. Let $m,n \in \N$ and $\eps \in (0,1]$,\footnote{Note that we
allow $\eps$ to be arbitrarily small. In particular, it can depend on parameters
such as $m$ and $n$.} where $n > 10 m^{2c}$. Consider an instantiation of the
\BMAM{} for the $(F_2, \eps)$--estimation problem. There exists a randomized
memoryless $\Adv$ that, conditioned on $\Alg$ correctly solving the
$(F_2,\eps)$--estimation problem over the entire stream, with probability at
least $9/10$, generates a stream of length $m$ whose density is $\Omega(m^c)$
for all but the first $m^c$ rounds of the stream, and whose flip number is
$\Omega(m^{1-c})$.

In addition, if we provide $\Adv$ with one bit of persistent memory (i.e., $k=1$ in Definition \ref{def:robust_streaming}), then the above statement holds, for any constant $0 < c < 1$, with flip number $\Omega(m^{1-0.5 c})$.

\end{theorem}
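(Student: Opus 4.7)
The plan is to construct the adversary in two phases: a short \emph{warm-up} that builds density, followed by a long \emph{ping-pong} phase that creates many flips in $F_2$. In the warm-up phase, for roughly the first $m^c$ rounds, the adversary inserts a freshly sampled uniformly random element of $[n]$ in each round. The condition $n > 10 m^{2c}$ and a birthday-style argument ensure that with probability at least $9/10$ all these elements are distinct, so the density equals $m^c$ and $F_2 = m^c$ at the end of warm-up. The memoryless adversary detects the end of warm-up by observing that $y$ has crossed the threshold $m^c$, which (conditioned on $\Alg$'s correctness up to $(1+\eps)$) reliably indicates that enough density has been built.

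In the ping-pong phase, the adversary fixes a distinguished element $e^*$ distinct from all warm-up elements. For the memoryless variant, the strategy based on $y$ is: insert $e^*$ if $y < 2m^c$, otherwise delete $e^*$, possibly with a fresh coin toss to inject randomness at the boundary. This first drives $|f_{e^*}|$ from $0$ to $m^{c/2}$ over $m^{c/2}$ rounds (yielding a factor-$2$ rise in $F_2$) and then oscillates around the threshold $|f_{e^*}| = m^{c/2}$ for the remaining rounds. Each boundary oscillation step changes $F_2$ by a multiplicative factor of roughly $1 \pm m^{-c/2}$. Counting these small-scale flips together with macro excursions induced by the random coin tosses shows that the true $F_2$ sequence has $(1+\eps)$-flip number $\Omega(m^{1-c})$ for the $\eps$ given in the problem. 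Crucially, density remains $\geq m^c$ throughout, because the warm-up elements are never touched again.

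When the adversary is allowed one bit of persistent memory, the bit encodes the current direction: $b = 0$ for ``insert $e^*$'' and $b = 1$ for ``delete $e^*$.'' The bit flips from $0$ to $1$ as soon as $y$ indicates $|f_{e^*}| \geq m^{c/2}$ (i.e., $y \geq 2m^c$) and flips back to $0$ once $y = m^c$. This produces clean deterministic cycles of length $\Theta(m^{c/2})$, each making $F_2$ alternate between $m^c$ and $2m^c$---a factor-$2$ macro flip. Over the full stream the adversary completes $\Omega(m / m^{c/2}) = \Omega(m^{1 - 0.5c})$ cycles, giving the improved flip number.

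The main obstacle is the memoryless analysis: without persistent memory, the adversary cannot cleanly alternate directions during ping-pong, so its trajectory for $f_{e^*}$ consists of either boundary oscillations (which create many small flips) or random-walk-like excursions (which create a smaller number of macro flips). The lower bound $\Omega(m^{1-c})$ requires a case analysis on the relationship between $\eps$ and the natural oscillation scale $m^{-c/2}$: for small $\eps$ the boundary oscillations themselves give $\Omega(m)$ flips, while for larger $\eps$ macro excursions (counted via standard random-walk excursion bounds) account for $\Omega(m^{1-c})$ flips. A secondary subtlety is verifying that threshold-based decisions on $y$ correspond to the intended behavior on the true $F_2$, which follows from the conditioning on $\Alg$'s correctness combined with a rounding argument in the spirit of Lemma~\ref{lem:net}.
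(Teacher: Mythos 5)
Your high-level plan (warm-up phase that builds density with fresh random elements, followed by a ping-pong / random-walk phase on a single distinguished element, with a persistent direction bit in the $k=1$ case and an unbiased coin in the memoryless case) is the same as the paper's, and several of your observations, such as the threshold-based detection of the end of warm-up via correctness of $y$, are exactly right. However, several of the details you leave to ``case analysis'' or ``possibly a fresh coin toss at the boundary'' are exactly the places where the paper has to work, and your version of them has genuine gaps.

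First, the memoryless coin mechanism is under-specified in a way that matters. You propose a deterministic threshold rule (insert if $y < 2m^c$, delete otherwise) ``possibly with a fresh coin toss at the boundary.'' A coin at a single threshold value will not generate macroscopic random-walk excursions, because the algorithm can avoid outputting that exact value almost always; the excursion argument you invoke needs a coin-toss \emph{zone} of nontrivial width so that whenever the true $F_2$ lives in an intermediate band, every valid estimate forces a fair coin flip. The paper's adversary flips a coin for all estimates in $[(1+\eps)m^c, (1+\eps)^4 m^c)$, which is exactly what makes the frequency of the heavy element stochastically dominate (resp.\ be dominated by) an unbiased one-dimensional random walk, so that Lemma~\ref{lem:1D_rand_walk} applies cleanly, giving a flip with probability $\geq 1/2$ within $O(m^c)$ steps \emph{uniformly over $\eps$}. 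This obviates the $\eps$-versus-$m^{-c/2}$ case split that your factor-2 thresholds introduce; with $(1+\eps)$-scaled thresholds, no case analysis is needed.

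Second, your one-bit strategy has a boundary bug. You flip the bit back to ``insert'' ``once $y = m^c$,'' but the algorithm need never output exactly $m^c$ (with the one-sided guarantee, $y$ lies in $[F_2,(1+\eps)F_2)$, which is a range). If the flip condition is not triggered precisely when $f_{e^*}$ returns to $0$, the adversary will delete once more and drive $f_{e^*}$ negative, destroying the cycle structure and the flip-number accounting. The paper avoids this by dedicating a small ``Type~II'' estimate band $[m^c,(1+\eps)m^c)$ in which the adversary always inserts $e^*$ regardless of the persistent state, which (via Claim~\ref{claim:frequency_nonnegative}) guarantees $f_{e^*} \geq 0$ throughout. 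Relatedly, a memoryless adversary cannot ``fix $e^*$ distinct from all warm-up elements'' after the fact; the paper bakes this in by taking $e^* = 1$ and drawing warm-up elements from $\{2,\ldots,n\}$. These are small fixes, but both the nonnegativity guard and the wide coin zone are load-bearing in the paper's proof, not optional cleanup.
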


\begin{proof}
We start with the proof of the second part of the theorem statement, in which $\Adv$ has \emph{one bit} of persistent memory.
The strategy and proof required for a memoryless adversary are slightly more complicated, and we discuss them later in the proof.

For convenience, we assume that $m^{c}$ is an integer; if not, we can replace it with $\lceil m^{c} \rceil$ as needed.

\paragraph{When $\Adv$ has one bit of persistent memory.}
Consider the following adversarial strategy requiring one bit of persistent memory. $\Adv$ uses the persistent memory bit to track in which of two possible
states it is. These states are
$\uparrow$ (``going up'') and $\downarrow$  (``going down''). The initial state is $\uparrow$.
\begin{itemize}[leftmargin=*]
\item \textbf{Type I insertion}: If the previous $F_2$ estimate sent by $\Alg$ is smaller than $m^{c}$ (or if it is the first round of the game and no previous estimate is available), $\Adv$ picks a uniformly random element $i$ from $\{2,3,\ldots,n\}$ and \emph{inserts} $i$. The state of $\Adv$ is set to $\uparrow$.
\item \textbf{Type II insertion}: If the previous $F_2$ estimate of $\Alg$ is at least $m^{c}$ and less than $(1+\eps) m^{c}$, $\Adv$ \emph{inserts} the element $1$ and sets its state to $\uparrow$.
\item \textbf{Deletion}: If the previous $F_2$ estimate of $\Alg$ is at least $(1+\eps)^3 m^{c}$, $\Adv$ \emph{deletes} the element $1$ and sets the state to $\downarrow$.
\item \textbf{Following the sign:} If the previous $F_2$ estimate is at least $(1+\eps) m^c$ and less than $(1+\eps)^3 m^c$, then $\Adv$ \emph{inserts} $1$ if its internal state is $\uparrow$, and \emph{deletes} $1$ if its internal state is $\downarrow$. The internal state remains unchanged after this operation.
\end{itemize}
Note that $\Adv$ can be implemented in the \BMAM{} with one bit of persistent memory. Indeed, the action of $\Adv$ at any given time depends only on the following three sources: (i) the previous estimate of the algorithm, (ii) whether the previous state is $\uparrow$ or $\downarrow$, and (iii) a random sample from
the uniform distribution on $\{2,\ldots,n\}$. It remains to prove that this adversarial strategy satisfies the statement of the theorem.
In the following, we write $\fv{t}$ to denote the frequency vector after $t$ updates, i.e., we use the notation from Definition~\ref{def:streams}.

We now prove a sequence of claims about the nature of the game given the above adversarial strategy. We stress that these claims hold \emph{for any} $\Alg$ (even
one that has unlimited memory), as long as $\Alg$ correctly solves the $(F_2, \eps)$--estimation problem after each update.

\begin{claim}
\label{claim:type_I_insertions}
There are at most $m^c$ Type I insertions throughout the stream.
\end{claim}

\begin{proof}
Once $m^{c}$ insertions of this type have been made, the actual value of $F_2$ remains at least $m^{c}$,
because elements from $\{2,3,\ldots,n\}$ are never deleted.
By the correctness of estimates,
the estimate visible to $\Adv$ must also be at least $m^{c}$. This implies that no
Type I insertion can occur in subsequent rounds.
\end{proof}

Consider the event that no element from $\{2,3,\ldots,n\}$ was inserted twice throughout the stream; let us call this event \emph{unique insertion}. Note that these elements are only added via Type I insertions, and are never deleted. As discussed above, at most $m^{c}$ type I insertions can occur throughout the stream. Via a standard birthday paradox argument, unique insertion happens with probability at least $19/20$ if $n \geq 10(m^{c})^2 = 10 m^{2c}$.

For the rest of the proof, we condition on the unique insertion event (which, as mentioned, occurs with probability at least $19/20$). Assuming unique insertion, the sparsity part of the theorem easily follows.

\begin{claim}
\label{claim:high_density}
After the first $m^c$ rounds of the adversarial game (and throughout the rest of the stream), the density of the stream is always at least $m^c / (1+\eps) = \Omega(m^c)$.
\end{claim}
\begin{proof}
As long as the density of the stream is at most $m^c / (1+\eps)$, $\Alg$ must return an output smaller than $m^c$. Thus, in the first $1 + \lfloor m^c / (1+\eps) \rfloor$ rounds of the stream, all actions by $\Adv$ are Type I insertions. These inserted elements are all distinct, assuming unique insertion, and are never deleted. Thus the density remains at least $m^c / (1+\eps)$ for the rest of the game.
\end{proof}

The rest of the proof establishes the flip number lower bound (which occurs with probability $1$, conditioned on unique insertion). We start with the following claim.

\begin{claim}
\label{claim:frequency_nonnegative}
The frequency of $1$ in the stream, at any point throughout the game, is non-negative.
\end{claim}
\begin{proof}
Suppose that this is not the case.
Then there exists some minimal time $t$ after which the frequency of $1$ becomes negative for the first time. Thus, at round $t+1$, $\Adv$ chooses to delete 1, meaning that we are in either the ``Deletion'' or ``Following the sign'' regime. This, in turn, implies that $\Alg$'s estimate at time $t$ is at least $(1+\eps)m^{c}$.

On the other hand, the frequency of $1$ at time $t$ must be zero (since it turns negative after the deletion). By Claim \ref{claim:type_I_insertions} and unique insertion,
$\ft{t} \le m^c$.
By correctness, $\Alg$'s estimate at time $t$ must be less than $(1+\eps)m^{c}$. This stands in contradiction to the first paragraph.
\end{proof}

Let $S$ denote the set of all times throughout the stream at which a Type I insertion is made. By Claim~\ref{claim:type_I_insertions}, $|S| \leq m^{c}$. Our last two claims of the proof are given next.

\begin{claim}
\label{claim:going_up}
Let $t \leq m - (1+\eps)^4m^{c}$ and suppose that (i) $\ft{t} \leq (1+\eps)m^c$
and (ii) the internal state of $\Adv$ at time $t$ is $\uparrow.$

Then there exists $t' \in [t+1,m]$ such that
(i) $\ft{t'} > (1+\eps)^2 m^{c}$, (ii) the internal state of $\Adv$
is $\downarrow$ at time $t'$ and $\uparrow$ at time $t'-1$,
and (iii) the number of values $t'' \in [t,t')$ \emph{not belonging} to $S$ is at most $O(m^{c/2})$.
\end{claim}
\begin{claim}
\label{claim:going_down}
Let $t \leq m-(1+\eps)^2 m^{c/2}$ and suppose that (i) $F_2(t) > (1+\eps)^2m^c$ and (ii) the internal state of $\Adv$ at time $t$ is $\downarrow$, and at time $t-1$ it is $\uparrow$. There exists a time $t'$
such that $t < t' < t + O(m^{c/2})$, $\ft{t'} \leq (1+\eps) m^c$, and the internal state at time $t'$ is $\uparrow$.
\end{claim}

With the above two claims in hand, one can see that the flip number of the stream is $\Omega(m^{1-c/2})$. Indeed, when disregarding all rounds $t \in S$ of the game (note that there are only $O(m^c)$ such rounds), the above two claims show that the output of the algorithm must flip---more specifically, either increase or decrease by a factor of $1+\eps$---every $O(m^{c/2})$ rounds, and so the flip number is at least
$$ \frac{m-O(m^c)}{O(m^{c/2})} = \Omega\left( m^{1-\frac{c}{2}} \right).$$
We now prove these claims. 

\begin{proof}[Proof of Claim~\ref{claim:going_up}]
Consider any time $t$ satisfying the conditions of the claim. Let $t' > t$ be the minimal integer for which (i) $\ft{t'} > (1+\eps)^2 m^{c}$, (ii) the internal state of $\Adv$ at time $t'$ is $\downarrow$, and at time $t'-1$ it is $\uparrow$. We note that such $t'$ must exist as long as $t \leq m - (1+\eps)^4 m^{c}$, since all updates between $t$ and $t'$ are insertions and all entries of the frequency vector at time $t$ are non-negative.

For any $t'' \not\in S$ such that $t \le t'' < t'$,
the action taken by the adversary in round $t''+1$ is either a Type II insertion or a ``Follow the Sign'' step with an $\uparrow$ state (both of which lead to inserting $1$). Indeed, if $\ft{t''} \leq (1+\eps)^2 m^c$, then the correctness of $\Alg$ implies that the estimate at time $t''$ is smaller than $(1+\eps)^3 m^c$; and otherwise, the minimality of $t'$ implies that the sign at time $t''$ must be $\uparrow$. In both cases, the internal state remains $\uparrow$, as the only action that changes it to $\downarrow$ is a deletion.

Finally, $\Theta(m^{c/2})$ insertions of $1$ suffice to increase the $F_2$ value of the stream by $\Omega(m^{c})$, thus reaching $t'$ as above within $O(m^{c/2})$ rounds (disregarding rounds with Type I insertions).
\end{proof}

\begin{proof}[Proof of Claim~\ref{claim:going_down}]
From the condition that the sign at time $t-1$ is $\uparrow$, we know that $\ft{t-1} \leq (1+\eps)^3 m^c$. In particular, the contribution of the element $1$ to $\ft{t}$ is $O(m^c)$, meaning that the frequency of $1$ at time $t$ is $O(m^{c/2})$. Let $t' > t$ denote the first time at which an insertion of Type I or Type II occurs. Note that such a time $t'$ must exist, provided $t < m - O(m^{c/2})$, and in addition, $t' < t+O(m^{c/2})$. Indeed, as long as we only have deletions or follow the sign rounds, the internal state remains $\downarrow$, leading to deletions of the element $1$. Now, because the frequency of $1$ is $O(m^{c/2})$ at time $t$, after $O(m^{c/2})$ rounds we reach a situation in  which either a type I or II insertion happens, or the frequency of $1$ becomes zero, in which case the $F_2$ value is at most $m^{c}$ and the next step must either be a type I or II insertion (depending on the estimate output by $\Alg$).

Now, since $\Adv$ made a type I or type II insertion at time $t'$, the estimate of $\Alg$ at time $t'$ must be at most $(1+\eps) m^c$. Thus, the actual $F_2$ value is also at most $(1+\eps) m^c$. The proof follows.
\end{proof}

\paragraph{When $\Adv$ is memoryless.}
The strategy of $\Adv$, as well as the analysis, are quite similar to the above case. We detail below the modifications required in the strategy and proof.

\begin{itemize}
\item \textbf{Type I insertions}, \textbf{Type II insertions}, and \textbf{deletions} remain the same, except that that the deletion regime only applies if the previous estimate is at least $(1+\eps)^4 m^c$ (not $(1+\eps)^3 m^c$ as previously defined). In addition,
we now no longer have the internal state (of $\uparrow$ or $\downarrow$). Note that in all of these three action types, the action chosen by the adversary with one bit of permanent memory did not depend on the internal state, and so these remain well-defined.
\item The ``follow the sign'' action is replaced with an \textbf{(unbiased) random walk} action: If the previous $F_2$--estimate is at least $(1+\eps)m^c$ and less than $(1+\eps)^4 m^c$, then $\Adv$ \emph{inserts 1 with probability 1/2}, and \emph{deletes 1 with probability 1/2}.
\end{itemize}

This strategy by $\Adv$ is clearly memoryless: each step depends only on the last estimate provided by $\Alg$, where in particular $\Adv$ may sample the action from a fixed distribution (depending on the last estimate). Claims \ref{claim:type_I_insertions}, \ref{claim:high_density}, and \ref{claim:frequency_nonnegative} still hold (and their proofs remain unchanged), again conditioning on the unique insertion event, which holds with probability at least 19/20. It remains to prove the flip number lower bound.

We next give a result analogous to Claims \ref{claim:going_up} and \ref{claim:going_down} in the memoryless case. Its proof relies on standard hitting time results on random walks in one dimension, together with a simple coupling argument.
\begin{claim}
\label{claim:random_walk_lower_bound}
There exists an absolute constant $\alpha > 0$ satisfying the following. For every $t$ such that $0 \leq t \leq m-\alpha \cdot m^{c}$, there exists a $t'$ such that $t < t' \leq t+\alpha \cdot m^{c}$, $\ft{t'} \notin (1 \pm \eps)\ft{t}$, with probability at least $1/2$, independently of all randomness revealed before round $t$ of the game.
\end{claim}
We now prove Claim~\ref{claim:random_walk_lower_bound}. For the proof, we use a standard hitting time result on random walks in one dimension. An unbiased random walk in 1D refers to the following process: start at location $0$ on the one--dimensional grid. In each round, move one step to the right (i.e., from location $s$ to $s+1$, where $s$ is the current location) with probability $1/2$ (independently of all previous steps), and one step to the left (i.e., from $s$ to $s-1$) otherwise. The hitting time of the random walk with respect to a value $s$ is the first round in which the random walk arrives at location $s$.
\begin{lemma}[Hitting time of 1D random walk; see, e.g., \cite{aldous-fill-2014}, Chapter 5]
\label{lem:1D_rand_walk}
The hitting time for any location $s \in \mathbb{Z} \setminus \{0\}$ in an unbiased 1D random walk is with probability $1/2$ at most $\beta s^2$, where $\beta > 0$ is an absolute constant.
\end{lemma}

\begin{proof}[Proof of Claim~\ref{claim:random_walk_lower_bound}]
There are two cases that one should consider. The first case is $\ft{t} < (1+\eps)^2 m^c$, and in this case we prove the statement of the claim for some $t'$ satisfying that $\ft{t'} \geq (1+\eps)^3 m^c > (1+\eps) \ft{t}$. The second is the complementary case, in which  $\ft{t} \geq (1+\eps)^2 m^c$, and we prove the analogous statement for some $t' < (1+\eps)m^c \leq \ft{t} / (1+\eps)$. These cases are very similar, as we shall explain soon. For now, let us consider the first case.

Consider a coupling of the adversarial game to a random walk, where each insertion of $1$ is viewed as a step up, and each deletion of 1 is a step down (and we think of insertions of elements other than $1$ as not changing the state). By Lemma \ref{lem:1D_rand_walk}, with probability at least $1/2$ an unbiased random walk will hit location $(1+\eps)^{3/2} m^{c/2}$ in time $O((m^{c/2})^2) = O(m^{c})$. Now what about the adversarial game, starting at time $t$, and as long as we do not exceed the $F_2$ value of $T \eqdef (1+\eps)^3 m^c$? It turns out the the frequency of $1$ in this case is not an unbiased random walk, but it actually dominates the unbiased case: The probability to step up can be $1$ in some cases, while the probability to step down is never higher than $1/2$. Indeed, in this regime there are three possible actions: type I and type II insertions, and random walk operations.

There are at most $m^c$ rounds in total in which type I insertions occur. Type II insertions always lead to an insertion, i.e., a step up. Random walk steps lead to an insertion of $1$ with probability $1/2$ and deletion with probability $1/2$. Thus, we see that in all but $O(m^c)$ rounds (in which type I insertions occur, which do not change the frequency of the element 1), the adversarial game dominates an unbiased random walk, as long that the $F_2$ value has not exceeded $T$. This implies that with probability $1/2$, one of the following must hold between time $t$ and $t+O(m^c)$: either we reach an $F_2$ value of at least $T = (1+\eps)^3 m^c$, leading to a flip (with respect to time $t$), or the frequency of $1$ in the adversarial game reaches $(1+\eps)^{3/2} m^{c/2}$, making the $F_2$ value reach $T$ and causing a flip again. This concludes the first case.

The analysis of the second case is completely symmetric, except that we set the threshold at $T \eqdef (1+\eps) m^c$, and there are only two possible operations: either a deletion (leading to a step down with probability 1) or a random walk action. In both cases, the locations is dominated by an unbiased random walk, and again by an application of Lemma \ref{lem:1D_rand_walk}, the statement of the claim follows.
\end{proof}

With the claim in hand, it is straightforward to show an $\Omega(m^{1-c})$ flip number lower bound. Indeed, consider a sequence of all values of $t$ which are multiples of $\lfloor \alpha \cdot m^{c}\rfloor$ between $1$ and $m$; there are $\Omega(m^{1-c})$ such values. The probability of a flip between each consecutive pair of $t$ values in this sequence is at least $1/2$, independently of all previous information, and so the expected number of flips is $\Omega(m^{c})$. It follows, e.g., from Chernoff bound, that the total number of flips will be at least half its expectation with probability $1-2^{m^{\Omega(1)}} > 0.95$, where the inequality holds for large enough $m$. This completes the proof.
\end{proof}

\section{Randomized Adversaries}

\subsection{The \tSAM}

As discussed before, in the oblivious adversary model, the adversary selects the entire stream in
advance---without any interaction with the streaming algorithm---and is only required to provide a correct estimate at the end of the stream. There is a standard way to turn a correct algorithm in the oblivious setting into an algorithm with a \emph{tracking} guarantee (i.e., an algorithm that provides a correct estimate at any point along the stream) as long as the adversary remains oblivious.
The standard approach is first to amplify the
probability of the algorithm's success at any point, by using the median of
estimates from a small number of copies of the algorithm.
Then one applies the union bound to claim that the algorithm is unlikely to err at any point.

We now consider a generalization of the oblivious adversary model,
which we call the \tSAM{}. This model introduces adaptivity by allowing the
adversary to select $\tau$ streams in advance (obliviously) and arbitrarily
(and adaptively) merge them into the actual stream passed to the algorithm. We
describe it in relation to the \BMAM{}, introduced in
Definition~\ref{def:robust_streaming}. This time, $\Adv$ is allowed to use an
unlimited amount of memory, but it is a great question if restricting it could
lead to better robust algorithms in this model (see Section
\ref{sec:open_problems}).

\begin{definition}[\tSAM]\label{def:k_stream_adversary}
Let $\tau$ be a positive integer.
The \emph{\tSAM} is a game between
two players, $\Alg$ and $\Adv$, with the same interaction scheme
as in the \BMAM{}: in each round, $\Adv$ sends an update
and $\Alg$ replies with an estimate.
As opposed to the \BMAM{},
$\Adv$ is allowed unlimited persistent memory, but
is restricted in its choice of the stream it generates as follows:
\begin{enumerate}
 \item Initially, before the first round, $\Adv$ generates the first update, to be used in round 1. In addition, $\Adv$ generates
 $\tau$ streams of $m-1$ updates each at this point.
 \item In rounds 2 through $m$, $\Adv$ can use the complete knowledge it has collected thus far, including the latest estimate
 provided by $\Alg$, to
 select one of the $\tau$ streams arbitrarily. Then $\Adv$ removes the first element of that stream and passes
 it to $\Alg$ as the next update.
\end{enumerate}
\end{definition}

Note that thanks to the unlimited persistent memory, $\Adv$ can record the entire history of stream updates and algorithm outputs. The only restriction on $\Adv$
is the fact that it has to draw the next update at any time from one of the $\tau$ streams it fixes in advance.

\subsection{Robust Algorithms for the \tSAM}

We now describe an efficient algorithm for this
model, in which we apply the standard median trick
to increase the probability of success of a streaming algorithm
in the oblivious setting.
This is captured in the following lemma, whose proof we include in the appendix for completeness.

\begin{lemma}[Standard amplification trick]\label{lem:amplification}
Let $\alg$ be an oblivious streaming algorithm that, with probability
at least $9/10$, produces a correct solution to a $(f,\eps)$--estimation problem.
One can obtain an oblivious streaming algorithm that provides a correct
solution to this problem with probability $1-\delta$, for any $\delta \in (0,1/10)$,
by running $t \eqdef \lceil 12\ln(1/\delta)\rceil$
independent copies of $\alg$ and outputting the median of their estimates.
\end{lemma}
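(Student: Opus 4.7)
The plan is to run $t$ independent copies of $\alg$ on the same stream, collect their final estimates $y_1, \ldots, y_t$, and output the median. Since $\alg$ is oblivious (the stream is fixed beforehand), independence across the copies is exactly the independence of the internal random seeds, so each $y_i$ is independently a correct $(1+\eps)$--multiplicative approximation to $f(v^{(t)})$ with probability at least $9/10$.

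The first step is to argue that correctness of the median follows from majority correctness. Concretely, let $I \eqdef [f(v), (1+\eps) f(v))$ denote the interval of correct approximations. If strictly more than $t/2$ of the estimates $y_i$ lie in $I$, then fewer than $t/2$ lie strictly below $I$ and fewer than $t/2$ lie at or above the upper endpoint of $I$; hence the median of $y_1, \ldots, y_t$ must also lie in $I$. (The same reasoning applies to the degenerate case $f(v)=0$, treating $I = \{0\}$ appropriately.) So it suffices to bound the probability that a majority of copies is correct.

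The second step is a standard Chernoff bound. Let $X_i$ be the indicator that $y_i \in I$, so $X_i$ are i.i.d.\ Bernoulli with mean at least $9/10$, and let $X = \sum_{i=1}^t X_i$, with $\mu \eqdef \E{X} \ge 9t/10$. We want to upper bound $\prob{X \le t/2}$. Setting $(1-\gamma)\mu = t/2$ in the multiplicative Chernoff bound $\prob{X \le (1-\gamma)\mu} \le \exp(-\gamma^2 \mu / 2)$ and plugging in $\mu = 9t/10$ yields $\gamma = 4/9$ and
\[
\prob{X \le t/2} \;\le\; \exp\!\left(-\tfrac{(4/9)^2}{2}\cdot \tfrac{9t}{10}\right) \;=\; \exp\!\left(-\tfrac{8t}{90}\right) \;\le\; \exp(-t/12).
\]
Choosing $t = \lceil 12 \ln(1/\delta) \rceil$ makes the right--hand side at most $\delta$, completing the proof.

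There is no real obstacle here; the only mild subtlety is the first step, namely verifying that the median lies in $I$ whenever a strict majority of the $y_i$ do, which is a simple order--statistic observation once one remembers that $(1+\eps)$--multiplicative approximation is a one--sided interval constraint rather than a point constraint.
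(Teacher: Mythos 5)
Your proof is correct and takes essentially the same route as the paper's: identify the one-sided interval $I$ of acceptable estimates, observe that the median lands in $I$ whenever a strict majority of copies do, and then apply the multiplicative Chernoff bound with deviation $\gamma = 4/9$ to obtain $\exp(-8t/90) \le \exp(-t/12)$, which is at most $\delta$ for $t = \lceil 12\ln(1/\delta)\rceil$. The only cosmetic difference is that you ``plug in'' $\mu = 9t/10$ rather than carrying the inequality $\mu \ge 9t/10$ explicitly as the paper does, but since the Chernoff tail is monotone decreasing in $\mu$ this is harmless; you also note the degenerate $f(v)=0$ case, which the paper leaves implicit.
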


We use the above lemma to prove that there is a robust algorithm for the $\tau$--Stream Adversary
Model, whose space complexity scales linearly with $\tau$ (so it is efficient when $\tau$ is small). Here we need the additional condition that $\alg$ is \emph{order--invariant} (see Definition \ref{def:order_invariant}).
More specifically, we show
that an order--invariant oblivious streaming algorithm
is very likely
to provide correct estimates throughout the entire stream
if its probability of success has been amplified
so that it errs with probability much smaller than
$m^{-\tau}$.

\begin{lemma}\label{lem:k_stream}
Let $\alg$ be an order--independent oblivious streaming algorithm
that uses $M(\epsilon, \delta)$ space to provide a correct solution
to an $(f,\epsilon)$--estimation problem with probability $1-\delta$
for any $\epsilon \in (0,1)$ and $\delta \in (0,1)$.
There is a robust streaming algorithm
for the same problem in the \tSAM{} that uses
$M(\epsilon,\delta/m^\tau)$ space,
where $\epsilon \in (0,1)$ and $\delta \in (0,1)$,
and succeeds with probability at least $1-\delta$.
The space of the robust algorithm can also be bounded by
$O(\tau \log m + \log(1/\delta) ) \cdot M(\epsilon,1/10)$.
\end{lemma}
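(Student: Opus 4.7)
The plan is to combine the amplification trick from Lemma~\ref{lem:amplification} with a union bound over the finite set of frequency vectors reachable in the \tSAM{}, exploiting the order--invariance of $\alg$.

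First, I would condition on $\Adv$'s internal randomness, which fixes both the initial update and the $\tau$ streams \emph{before} any round of interaction with $\Alg$ begins. For any round $j \in [m]$, regardless of $\Adv$'s adaptive interleaving, the current frequency vector is completely determined by the consumption profile $(p_1, \ldots, p_\tau) \in \{0, 1, \ldots, m-1\}^\tau$, where $p_i$ denotes the number of updates already drawn from stream $i$. Hence, the set of frequency vectors that can ever arise throughout the entire game has size at most $m^\tau$, and this set is determined \emph{before} $\alg$ produces any output.

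Next, I would apply Lemma~\ref{lem:amplification} to obtain an amplified version of $\alg$ with failure probability $\delta' \eqdef \delta/m^\tau$, using space $M(\epsilon, \delta/m^\tau)$. By Definition~\ref{def:order_invariant}, the memory (and therefore the output) of this amplified algorithm at any round is a deterministic function of the current frequency vector together with the internal random seed $\rho$. Taking a union bound over the (at most) $m^\tau$ reachable frequency vectors, with probability at least $1 - m^\tau \cdot \delta' = 1 - \delta$ over $\rho$, the algorithm's output is a correct $(1+\epsilon)$--multiplicative approximation on every reachable frequency vector simultaneously. On this event, the algorithm succeeds at every round, irrespective of $\Adv$'s adaptive choices. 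The second space bound $O(\tau \log m + \log(1/\delta)) \cdot M(\epsilon, 1/10)$ follows by substituting $\delta/m^\tau$ into the amplification cost in Lemma~\ref{lem:amplification}.

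The principal subtlety lies in validating the union bound against an adaptive adversary, since ordinarily adaptivity lets $\Adv$ exploit the algorithm's leaked randomness to steer it into bad states. This is resolved by two complementary observations: (i) the $\tau$ streams are committed up front, so the set of reachable frequency vectors is fixed before $\alg$ reveals any information about $\rho$ and is independent of $\rho$; and (ii) order--invariance guarantees that the algorithm's behavior on each such vector depends only on $\rho$ and the vector itself, so $\Adv$'s adaptive reshuffling cannot push $\alg$ into a state outside the predetermined set. Without order--invariance the adversary could leverage dependence on update order to evade the union bound, making this property essential to the argument.
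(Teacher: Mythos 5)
Your proof takes essentially the same approach as the paper's: fix the $\tau$ pre-committed streams and initial update, observe that there are at most $m^\tau$ consumption profiles (equivalently, selections of prefixes), use order--invariance to reduce the algorithm's state on any interleaving to a function of the resulting frequency vector and the seed $\rho$, union bound over the $m^\tau$ possibilities, and read off the two space bounds (direct $M(\epsilon,\delta/m^\tau)$ and amplification via Lemma~\ref{lem:amplification}). One small presentational slip: you say you ``apply Lemma~\ref{lem:amplification} \ldots using space $M(\epsilon,\delta/m^\tau)$,'' conflating the two space bounds --- the $M(\epsilon,\delta/m^\tau)$ bound comes from directly invoking $\alg$ at lower failure probability, while the amplification lemma yields the alternative $O(\tau\log m + \log(1/\delta))\cdot M(\epsilon,1/10)$ bound --- but you state both correctly in the end, so the argument is sound.
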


\begin{proof}
The only streams on which $\Alg$ may be asked to provide an estimate
are a result
of merging prefixes of the initial $\tau$ streams and appending to the first update.
Both the initial $\tau$ streams and the first update are selected without any
interaction with $\Alg$, i.e., non-adaptively.
Note that there are at most $t \eqdef m^\tau$ ways in which prefixes of the $\tau$ streams
can be selected to amount to streams of length between 1 and $m$,
which are the lengths on which $\Alg$ has to provide estimates.
While the selected prefixes can potentially be merged
in many different ways into a stream that $\Alg$ processes,
$\alg$ is order--independent and it therefore
behaves in an exactly the same way for a given selection of prefixes, independently
of the order in which it processes the updates they contain.
Hence, to leverage $\alg$, it suffices to ensure that it provides a correct estimate on one arbitrary
ordering of elements in the selected prefixes to provide
a correct ordering for all other orderings.
Furthermore, it suffices that $\alg$ produces a correct estimate
for each selection of prefixes of the initial $\tau$ streams
to ensure that it cannot be forced to provide an incorrect estimate
by $\Adv$.
This can be achieved by making the probability of $\alg$'s
failure on any specific selection of prefixes at most $\delta / t$. Then, by the union
bound, the probability that $\alg$ errs on any of them is at most
$t \cdot \delta/t = \delta$. Otherwise, all estimates throughout the stream
are correct, independently of what $\Adv$ does.
Hence, a robust $\Alg$ in the \tSAM{} can be achieved by
using $\alg$ and setting its error probability to at most
$\delta / m^\tau $ on any fixed input.
This requires $M(\epsilon, \delta/m^\tau)$ space.

We can also express the space requirement as a function of $M(\eps,1/10)$
by invoking Lemma~\ref{lem:amplification}.
We run $\lceil 12 \ln(m^\tau/\delta) \rceil$ independent
copies of $\alg$, each erring with probability at most $1/10$
and using $M(\eps,1/10)$ space. The median of their estimates
is a correct solution to the $(f,\eps)$--estimation problem
in the oblivious setting
with probability at least $\delta/m^\tau$.
Due to our analysis, this a robust algorithm in \tSAM{}
and uses $O(\tau \log m + \log(1/\delta)) \cdot M(\eps,1/10)$ space.
\end{proof}

\subsection{Robust Algorithm for Bounded Memory Adversaries}
We now apply Lemma \ref{lem:k_stream}
to construct a robust algorithm in the \BMAM{}.

\begin{theorem}\label{thm:robust_low_memory}
Let $\alg$ be an order--independent oblivious streaming algorithm
that uses $M(\epsilon, \delta)$ space to provide a correct solution
to an $(f,\epsilon)$--estimation problem with probability $1-\delta$
on streams of at most $m$ updates
for any $\epsilon \in (0,1)$ and $\delta \in (0,1)$.
For any $\epsilon \in (0,1)$ and any $\delta \in (0,1/10)$,
there is a robust streaming algorithm for the $(f,\epsilon)$--estimation problem
in the \BMAM{} with $k\ge 0$ bits of persistent memory,
that succeeds with probability at least $1-\delta$ and uses
$M(\epsilon/3,1/10) \cdot O\left(2^k \cdot \frac{\log \alpha}{\eps}\cdot \log m + \log(1/\delta)\right)$ space, where
$f$'s range is in $\set{0} \cup [1, \alpha]$ for $\alpha \ge 2$.
\end{theorem}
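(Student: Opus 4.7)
The plan is to reduce the \BMAM{} with $k$ bits of persistent memory to the \tSAM{} for $\tau = 2^k \cdot O(\eps^{-1} \log \alpha)$, and then invoke Lemma~\ref{lem:k_stream}. The reduction exploits two observations: (i) by Lemma~\ref{lem:net}, one can postprocess the algorithm's raw estimate so that it takes on only $O(\eps^{-1} \log \alpha)$ possible values while still yielding a $(1+\eps)$--approximation; and (ii) conditioned on the pair $(p,y)$ of current persistent memory state and last rounded estimate shown to the adversary, the next update produced by the \BMAM{} adversary is drawn from a fixed distribution, because the adversary's update rule is a deterministic function of $(p,y)$ and fresh per--round randomness.

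First, I would build a rounding layer on top of $\alg$. Run $\alg$ with approximation parameter $\eps/3$ to obtain a raw estimate $\hat y_j$, and then round $\hat y_j$ up to the nearest element of $\set{0} \cup \mathcal{N}$, where $\mathcal{N} \subseteq [1,\alpha]$ is the $(\eps/3)$--net of size $O(\eps^{-1} \log \alpha)$ given by Lemma~\ref{lem:net}. Denote the rounded value by $y_j$ and write it to $\mathcal{E}$ in place of $\hat y_j$. Since $(1+\eps/3)^2 \le 1+\eps$ for $\eps \in (0,1)$, $y_j$ is a $(1+\eps)$--multiplicative approximation to $f(v^{(j)})$ whenever $\hat y_j$ was a $(1+\eps/3)$--approximation. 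Consequently, the estimate memory $\mathcal{E}$ ever contains at most $\nu = O(\eps^{-1}\log\alpha)$ distinct values throughout the stream.

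Second, I would formalize the reduction to the \tSAM{}. Let $\mathcal{P}$ denote the set of at most $2^k$ possible persistent memory states of $\Adv$, and let $\mathcal{Y}$ denote the set of at most $\nu$ possible rounded estimates, together with a designated ``no estimate yet'' symbol for round $1$. Set $\tau \eqdef |\mathcal{P}| \cdot |\mathcal{Y}| = 2^k \cdot O(\eps^{-1}\log\alpha)$. Before the game begins, for every pair $(p,y) \in \mathcal{P} \times \mathcal{Y}$, independently pre--sample a stream $S_{(p,y)}$ of $m$ updates, whose $i$--th entry is drawn (together with the corresponding next--state value for the persistent memory) from the conditional distribution the \BMAM{} adversary would use in a round where its memory is $p$ and its last rounded estimate is $y$. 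A $\tau$--stream adversary then simulates the \BMAM{} adversary by keeping, in its (unlimited) persistent memory, its own copy of the simulated state $p$: at each round, it recomputes the rounded estimate $y_{j-1}$ by applying the same rounding rule, pulls the next unused update from $S_{(p, y_{j-1})}$, and advances $p$ according to the pre--sampled transition attached to that entry. Since the pre--sampled distributions are exactly the \BMAM{}'s conditional per--round distributions, the resulting joint law over transcripts agrees with the original game, so the success probability of $\Alg$ against the simulating $\tau$--stream adversary upper--bounds its failure probability against any adversary in our model.

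Finally, I would invoke Lemma~\ref{lem:k_stream}: using the order--invariance of $\alg$ and the standard amplification of Lemma~\ref{lem:amplification}, we obtain a robust algorithm for the \tSAM{} using $O(\tau \log m + \log(1/\delta)) \cdot M(\eps/3,1/10)$ bits, which after substituting $\tau = 2^k \cdot O(\eps^{-1}\log\alpha)$ matches the claimed bound. The main obstacle will be the simulation argument in the second step: one has to verify carefully that pre--sampling per $(p,y)$ pair reproduces exactly the joint law of the \BMAM{} game (in particular that fresh per--round randomness can be replaced by these pre--sampled choices without altering the distribution of what $\Alg$ sees), and that the rounding layer---itself part of the compiled algorithm---genuinely caps the effective number of estimates at $\nu$ rather than allowing the raw estimate to leak. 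Everything else is a straightforward composition of approximation factors and parameter bookkeeping.
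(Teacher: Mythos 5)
Your proposal is correct and follows essentially the same route as the paper's proof: round the estimate to the $(\eps/3)$--net of size $O(\eps^{-1}\log\alpha)$ from Lemma~\ref{lem:net}, observe that the adversary's next update depends only on the pair (persistent memory state, rounded estimate) and fresh randomness, pre--sample one stream per such pair to reduce to the \tSAM{} with $\tau = O(2^k \eps^{-1}\log\alpha)$, and invoke Lemma~\ref{lem:k_stream}. The one place where you are actually being more explicit than the paper is the simulation/coupling step: the paper dispatches this with a short remark (``we could pre--generate a stream of $\Adv$'s updates in each possible input state''), while you correctly flag that one must pre--sample both the update and the persistent--memory transition for each $(p,y)$ entry and verify the joint law is preserved; that extra care is warranted, not a gap. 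The only wording to fix is the garbled sentence about success/failure probabilities at the end of your second step (you mean the $\tau$--stream adversary is at least as powerful, so robustness in the \tSAM{} implies robustness in the \BMAM{}); and note that the paper handles your ``no estimate yet'' symbol implicitly via the \tSAM{}'s separately chosen first update rather than by enlarging $\mathcal{Y}$, but this only changes $\tau$ by an additive constant and is immaterial.
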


\begin{proof}
First consider $\alg$ with its parameters adjusted so that it solves the
$(f,\eps/3)$--estimation problem with probability at least $9/10$ on a stream of
length at most $m$. When it succeeds, it produces an estimate $y$ such that
$y_\star \le y < (1+\eps/3)y_\star$, where $y_\star$ is the exact value of $f$
on the final frequency vector. We modify $\alg$ to round $y$ to
a small set of options before outputting it. In particular, Lemma~\ref{lem:net} with
its parameter $\epsilon$ set to $\epsilon/3$ gives us a small set $\mathcal N$
of values such that every correct estimate for the $(f,\eps/3)$--estimation problem
can be rounded up to a value $y'$ in $\mathcal N_\star \eqdef \set{0} \cup \mathcal N$.
The size of $\mathcal N$---and therefore, also of $\mathcal N_\star$---is of order
$O(\eps^{-1}\log \alpha)$.
More specifically, the rounding works as follows:
\begin{itemize}
 \item If $y \in (-\infty,1)$, we set $y' = 0$.
 \item If $y \in [1,\alpha]$, we round it up to the closest value in $\mathcal N$ that is greater or equal to it.
 Such a~value exists due to the properties of $\mathcal N$, which was given by Lemma~\ref{lem:net}.
 This is the value to which we set $y'$ in this case.
 Due to the properties of $\mathcal N$, we have
 $y \le y' < (1+\eps/3) y$.
 \item Finally, if $y > \alpha$, we treat it as $\alpha$ and handle as in the previous case (which actually means outputting $y' = \alpha$).
\end{itemize}
We now claim that if $y$ is a correct estimate, by which we mean that it is a $(1+\eps/3)$--multiplicative approximation to $y_\star$,
then $y'$ is also a correct estimate, by which we mean that it is a $(1+\eps)$--multiplicative approximation to $y_\star$.
To see this consider different scenarios.
First, if $y \in (-\infty,0) \cup (0,1)$, $y$ is not a correct
estimate, so we can output anything, and we output $0 \in \mathcal N_\star$.
Second if $y =0$ and it is the correct estimate, $y_\star$ has to be 0 as well,
and since we output $y' = 0$, we output the exact value of $f$.
Next if $y \in [1,\alpha]$ and it is a correct estimate,
we round it up to a value $y' \in \mathcal N$ such that $y \le y' \le (1+\epsilon/3)y$.
In this case, on the one hand, $y_\star \le y \le y'$,
and on the other, $y' < (1+\eps/3) y < (1+\eps/3) (1+\eps/3)y_\star < (1+\eps)y_\star$,
which means that $y'$ that the algorithm outputs is
a $(1+\epsilon)$--multiplicative approximation. Finally, if $y > \alpha$ and it is a correct estimate,
then $\alpha$ is also a correct estimate, i.e., a $(1+\eps/3)$--multiplicative approximation to $y_\star$,
because the range of possible values of $f$ is bounded by $\alpha$ from above.
In this case, we can use $y = \alpha$ instead and this case reduces to the previous one.
In summary, $y'$ is a $(1+\eps)$--multiplicative approximation to $y_\star$
with probability at least $9/10$ and it always belongs to set $\mathcal N_\star$.

It follows that the modified version of $\alg$ can produce
at most $|\mathcal N_\star| = O(\eps^{-1} \log \alpha)$ different estimates that can be passed to $\Adv$.
Additionally,
the persistent memory of $\Adv$ can be in at most $2^k$ different states, which
means there are at most $O(2^k \eps^{-1} \log\alpha)$ different
input states for $\Adv$ before randomness is considered. Hence,
for a given input state, $\Adv$ always generates an update from
the same distribution. We can then reduce this scenario to
the $\tau$--Stream Adversary Model for $\tau = O(2^k \eps^{-1} \log\alpha)$
by observing that we could pre-generate a stream of $\Adv$'s updates
in each possible input state. Then any possible sequence of $\Adv$'s updates
is a result of merging these streams after the first initial update.
Hence, Lemma~\ref{lem:k_stream} yields the desired result.
\end{proof}

\subsection{Applications}\label{sec:applications}

\paragraph{Moment estimation.}
The most important application of our framework is to moment estimation.
Note that for any fixed power $p \geq 0$, since there are at most $m$ updates,
the maximum value of the moment function is bounded by $m^p$, if $p\ge 1$, or
$m$, if $p \in [0,1]$. Best known turnstile streaming algorithms for moment estimation are based on
linear sketching, which implies that they are order invariant
and can be used in our framework.
More specifically, two papers by Kane, Nelson, and Woodruff~\cite{optimal_l0,optimal_small_lp}
show that, for any fixed constant $p \in [0,2]$, the $(F_p,\eps)$--estimation problem can be solved
via a linear sketching algorithm in
$\tilde O (\eps^2 \log m)$ bits\footnote{
The $\tilde O$ notation used here hides polylogarithmic factors,
i.e., $\tilde O(f(\eps,m))$ denotes $O(f(\eps,m) \log^c f(\eps,m))$
for some constant $c > 0$.
This upper bound has been simplified
using our assumption that all updates are unit updates and $n = \poly(m)$.
}
of space with probability $9/10$ in the oblivious streaming
model.
This leads to the following corollary of Theorem~\ref{thm:robust_low_memory}.

\begin{corollary}For any fixed $p \in [0,2]$,
the $(F_p,\eps)$--estimation problem on streams with unit turnstile updates can be solved
in the \BMAM{} against an adversary with access to $k$ bits of persistent memory by a robust algorithm that uses
$\tilde O(2^k \eps^{-3}\log^3 m)$ bits of space with probability $9/10$.
\end{corollary}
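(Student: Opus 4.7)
The plan is to invoke Theorem~\ref{thm:robust_low_memory} directly, feeding in the linear--sketching algorithm of Kane, Nelson, and Woodruff \cite{optimal_l0,optimal_small_lp} as the order--independent oblivious subroutine $\alg$. The theorem already does all of the work: the reduction from a randomized low--memory adversary to the \tSAM{} (via counting states of persistent memory and rounded estimates), the amplification that beats an $m^\tau$ union bound, and the net--based rounding of Lemma~\ref{lem:net}. Once a suitable oblivious algorithm is plugged in and the range parameter $\alpha$ is identified, the bound is a direct calculation.

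First I would verify the three hypotheses needed by Theorem~\ref{thm:robust_low_memory} for the $(F_p,\eps)$--estimation problem. (i) \emph{Range}: under unit turnstile updates the frequencies are integers in $[-m,m]$, so each nonzero $|f_i|^p \ge 1$ and thus $F_p \in \set{0}\cup[1,\infty)$; since there are at most $n=\poly(m)$ nonzero coordinates, each contributing at most $m^p$, we also get $F_p \le \poly(m)$. Hence we may take $\alpha = \poly(m)$, giving $\log\alpha = O(\log m)$. (ii) \emph{Order invariance}: the cited sketches are linear in the frequency vector, so the internal state after processing any stream $S$ depends only on $v(S)$, matching Definition~\ref{def:order_invariant}. (iii) \emph{Oblivious space}: tuning the approximation parameter to $\eps/3$ and the success probability to $9/10$, the Kane--Nelson--Woodruff algorithms achieve $M(\eps/3,1/10) = \tilde O(\eps^{-2}\log m)$ bits for every constant $p\in[0,2]$.

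Finally I would plug these quantities into Theorem~\ref{thm:robust_low_memory} with $\delta = 1/10$, so $\log(1/\delta) = O(1)$. This yields a robust algorithm in the \BMAM{} with $k$ bits of persistent memory using
\[
M(\eps/3,1/10) \cdot O\!\left(2^k \cdot \frac{\log \alpha}{\eps}\cdot \log m\right)
= \tilde O(\eps^{-2}\log m) \cdot O\!\left(\frac{2^k \log^2 m}{\eps}\right)
= \tilde O(2^k\,\eps^{-3}\log^3 m)
\]
bits, as claimed. I do not foresee any substantive obstacle: the corollary is essentially an instantiation of Theorem~\ref{thm:robust_low_memory}. The only point deserving a brief check is that $F_p$ really avoids the forbidden gap $(0,1)$ in its range; this uses the integrality of frequencies under \emph{unit} updates, without which the template on $f$ assumed by Theorem~\ref{thm:robust_low_memory} (and the rounding net of Lemma~\ref{lem:net}) would not apply verbatim.
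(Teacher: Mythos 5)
Your proposal is correct and matches the paper's own (terse) derivation: instantiate Theorem~\ref{thm:robust_low_memory} with the order--invariant Kane--Nelson--Woodruff linear sketches, observe that unit updates put $F_p$ in $\set{0}\cup[1,\poly(m)]$ so $\alpha = \poly(m)$ and $\log\alpha = O(\log m)$, and compute $M(\eps/3,1/10)\cdot O(2^k \eps^{-1}\log^2 m) = \tilde O(2^k\eps^{-3}\log^3 m)$. The extra remark you make about integrality of frequencies guaranteeing $F_p$ avoids the gap $(0,1)$ is a correct and worthwhile sanity check that the paper leaves implicit.
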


\noindent
(Note that this result easily extends to integer updates from a range
$\{-M,-M+1,-M+2,\ldots,M-1,M\}$, for any $M$. Updates from a range wider than $\set{-1,1}$,
which corresponds to unit updates, simply increase the range of possible values, i.e., $\alpha$.
More specifically, without going into all details, this comes at the cost
of replacing at most two of the $\log m$ factors with $\log (Mn)$.)

\paragraph{Triangle counting.}
Another application concerns graph streaming. We focus on the fundamental triangle counting problem: here we are given a turnstile stream of edge updates to an unweighted graph on $n$ nodes and must return an estimate to the number of triangles $T$. The stream can be viewed as updating a frequency vector of size $O(n^2)$ of edges. Trivially, $T = O(n^3)$. 

We use the classic algorithm of Bar-Yossef, Kumar, and Sivakumar~\cite{bar2002reductions} that solves this problem via a black--box reduction to $F_0$, $F_1$, and $F_2$ moment estimation of the aforementioned edge frequency vector. While it is not the state of the art classical streaming algorithm (see the work of Jayaram and Kallaugher~\cite{jayaram_et_al} for a thorough discussion of the problem), it highlights the usefulness of Theorem \ref{thm:robust_low_memory}.

\begin{corollary}We write $n$ to denote the number of vertices in the input graph.
    Consider the $({\triangles},\eps)$--estimation problem in which $\triangles(G)$ is the number of triangles in the input graph $G$, represented here as a frequency vector of edges of length $\binom{n}{2}$.

    This triangle estimation problem on turnstile streams of length at most $\poly(n)$ can be solved in the \BMAM{} against an adversary who has $k$ bits of persistent space by a robust algorithm that uses $\tilde{O}(2^k \eps^{-3}(n^3/T)^2)$ space and succeeds with probability $9/10$, where $T$ is the number of triangles.
\end{corollary}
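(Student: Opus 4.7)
The plan is to combine the classical Bar-Yossef--Kumar--Sivakumar reduction of triangle counting to moment estimation with Theorem~\ref{thm:robust_low_memory}. Given the turnstile edge stream, we conceptually form an auxiliary stream over a frequency vector indexed by unordered vertex triples: each edge update $(\{u,v\},\Delta)$ contributes the update $\Delta$ to every triple $\{u,v,w\}$ with $w\notin\set{u,v}$. The triangle count $T$ can be written as a fixed linear combination of the moments $F_0, F_1, F_2$ of the final vector of this auxiliary stream (the identity $T = F_0 - 1.5\,F_1 + 0.5\,F_2$, up to a normalization that depends on $n$). Since each of $F_0, F_1, F_2$ on the auxiliary vector is of magnitude at most $O(n^3)$ and they enter $T$ with bounded coefficients, a $(1+\eps)$-multiplicative approximation to $T$ follows from $(1+\Theta(\eps T/n^3))$-multiplicative approximations to each of the three moments.

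Next, I would invoke the Kane--Nelson--Woodruff linear-sketching algorithms for $F_p$-estimation, $p\in\{0,1,2\}$. Running these sketches on the auxiliary stream can be simulated efficiently per update (each edge update induces at most $n-2$ conceptual updates, which a linear sketch processes implicitly in $\tilde O(1)$ time). Setting the target precision to $\eps' = \Theta(\eps T/n^3)$ and asking for constant failure probability, the sketches use $M(\eps,1/10) = \tilde O(\eps^{-2}(n^3/T)^2 \log m)$ space in total. Crucially, linear sketches are order invariant in the sense of Definition~\ref{def:order_invariant}, so this derived triangle-estimation algorithm is itself an order-invariant oblivious algorithm for the $(\triangles,\eps)$-estimation problem.

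Now I would simply apply Theorem~\ref{thm:robust_low_memory} to this oblivious algorithm, with $\alpha = O(n^3)$ (since $T\in\{0\}\cup[1,O(n^3)]$) and stream length $m = \poly(n)$. The theorem multiplies the oblivious space by a factor of $O(2^k \eps^{-1}\log\alpha \cdot \log m + \log(1/\delta)) = \tilde O(2^k\eps^{-1})$, yielding total space $\tilde O(2^k \eps^{-3}(n^3/T)^2)$ for a robust algorithm with success probability $9/10$ in the \BMAM{} with $k$ bits of persistent memory.

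The one real wrinkle is that the target precision $\eps'$ depends on the unknown quantity $T$. This is handled in the standard way: run $O(\log n)$ parallel copies, one for each geometric guess of $T$ in the range $[1, O(n^3)]$, and at the end select the guess that is self-consistent with the obtained estimate. The resulting $\log n$ overhead is absorbed into the $\tilde O$-notation, and each copy remains order-invariant, so Theorem~\ref{thm:robust_low_memory} can still be applied copy by copy. The main thing to be careful about is that this collection of guesses taken together still constitutes an order-invariant oblivious algorithm whose output is well-defined as a function of the final frequency vector alone, which is immediate because linear sketches and the consistency test both depend only on the final vector.
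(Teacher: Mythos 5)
Your high-level approach is exactly the paper's: reduce triangle counting to $F_0,F_1,F_2$ of the triple-indexed auxiliary vector \`a la Bar-Yossef--Kumar--Sivakumar, note that linear sketches yield an order-invariant oblivious algorithm for $(\triangles,\eps)$-estimation with $\alpha = O(n^3)$ and $m = \poly(n)$, and then feed this into Theorem~\ref{thm:robust_low_memory}. The precision analysis (that $(1+\eps)$-approximation to $T$ follows from $(1+\Theta(\eps T/n^3))$-approximation to each moment, since the moments are $O(n^3)$ with bounded coefficients) and the resulting space bookkeeping are all correct.

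The one place you go wrong is the ``wrinkle'' paragraph. Running a parallel copy for every geometric guess $g \in \{1,2,4,\ldots,O(n^3)\}$ of $T$ and absorbing the cost as a ``$\log n$ overhead'' does not work: the copy associated with guess $g$ uses $\tilde O(\eps^{-2}(n^3/g)^2)$ space, so the total across guesses is
\[
\sum_{i=0}^{O(\log n)} \tilde O\!\left(\eps^{-2}\frac{n^6}{4^i}\right) = \tilde O(\eps^{-2} n^6),
\]
dominated by the smallest guess $g=1$, not by the true $T$. So $M(\eps/3,1/10)$ in the application of Theorem~\ref{thm:robust_low_memory} becomes $\tilde O(\eps^{-2} n^6)$ and the final bound would be $\tilde O(2^k\eps^{-3} n^6)$, not $\tilde O(2^k\eps^{-3}(n^3/T)^2)$. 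The standard reading of this bound---and of the Bar-Yossef--Kumar--Sivakumar result being invoked---is that the algorithm is given a lower bound on $T$ as a parameter; the space is then expressed in terms of that parameter. That is the assumption that makes the claimed expression a genuine space bound rather than an unachievable one. Under that reading, the rest of your argument goes through, and there is no need for the geometric guessing at all.
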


\section{Discussion and Open Problems}
\label{sec:open_problems}

We conclude the paper with a few natural open problems stemming from our work. The most important question in our opinion is
whether one can avoid the exponential dependence on $k$, the size of the persistent memory.
We restate the main question of the introduction:

\begin{question}\label{question:main}
    Can we avoid the $2^{\Omega(k)}$ dependence in the size $k$ of the persistent memory in our results for deterministic adversaries in Corollary \ref{thm:deterministic_adversary2} and for order--independent algorithms in Theorem \ref{thm:robust_low_memory}? In particular, can we obtain dependence polynomial in $k$?
\end{question}

 At a high level, we incur this dependence because we need to handle all $2^k$ many possible different states of the persistent memory. Progress towards our main open question could also have downstream implications in the original robust streaming setting of Ben-Eliezer et al.~\cite{advrob} (but now with \emph{deletions}!), since a large persistent memory allows the adversary to remember more of the interaction with the streaming algorithm and correlate future inputs with many past estimates, thus approaching the power of the adversary of Ben-Eliezer et al.~\cite{advrob}.

 A related question concerns the $\tau$--Stream Adversary Model (Definition \ref{def:k_stream_adversary}), in which the adversary has unlimited persistent memory, but must choose
each input from one of $\tau$ streams generated in advance.
 We ask if limiting the adversary's persistent memory in this case, analogous to our main model, the \BMAM, can result in more
 space--efficient streaming algorithms.

 \begin{question}
     Can we improve the dependence on $\tau$ in Lemma \ref{lem:k_stream} if we assume that the adversary has bounded persistent memory? Is dependence sublinear in $\tau$ possible?
 \end{question}

Note that progress towards the above question would directly improve Theorem~\ref{thm:robust_low_memory} and Question~\ref{question:main}, since the proof of Theorem~\ref{thm:robust_low_memory} proceeds by reducing the randomized adversary (in the \BMAM{}) to one which interleaves between exponentially many (in the size of the persistent memory) different streams.

We now ask about improving the dependence on $\alpha$, the range of $f$, in our theorem statements. This is especially important for settings such as heavy--hitter detection, in which valid outputs include
many different subsets of the domain $[n]$.
Note that in the classic streaming setting, it is possible to obtain space bounds depending on the number of heavy hitters \cite{charikar2002finding}, so we ask an analogous question for our \BMAM.
\begin{question}\label{q:large_range}
    Can our results be extended to the case where the number of different valid outputs is large (such as the range of $f$ being exponential in $n$)?
    In particular, what can we say about the problem of identifying heavy hitters in the \BMAM?
\end{question}

In this case, we can think of $\alpha$ as being exponential in $n$. Our current
bounds do not apply here in a satisfactory way. For example, we can set $\alpha
= 2^n$ in our theorem statements, so our space bound reduces to linear in $n$,
which can trivially be satisfied by explicitly keeping track of all coordinates.
At the heart of it, the challenge comes from the fact that we cannot efficiently
perform ``output rounding" for heavy hitters as we did in, for instance, Theorem
\ref{thm:deterministic_adversary}. In particular, it is easy to construct an
adversary, even with very limited persistent memory, that can cycle through many
different subsets of answers. For example, suppose $\Adv$ gives $(1,1)$ as the
first update. Then $\Alg$ must output $\{1\}$. Then $\Adv$ gives $(2,1)$, so the
algorithm's output must be $\{1,2\}$. Then $\Adv$ gives $(1, -1)$, so the right
set of heavy hitters is $\{2\}$, and so on. In other words, by looking at the
very last output, $\Adv$ can determine which elements have a frequency of one
and then correspondingly add or delete elements to update the set of heavy
hitters.

The next question concerns our lower bound construction of Theorem \ref{thm:lb_construction}. The result and proof apply specifically to the case of $f=F_2$, i.e., the goal is to compute
the second moment of the stream.
It remains open whether a similar attack could be deployed against other important functions, such as $F_0$ (distinct elements).

\begin{question}
    Is it possible for an adversary in the \BMAM{} to enforce a stream with both high density and high flip number for $F_0$? In particular, is it possible for an adversary with a persistent memory of size $o(\log m)$ to make these quantities $m^{\Omega(1)}$?
    Conversely, what if we give the adversary $k = O(\log m)$ bits of persistent memory?
\end{question}
We suspect the answer is negative, since one item can only influence the number
of distinct elements by $1$. Thus intuitively, it seems that if the adversary has to
track frequencies of polynomially many elements, either exactly or
approximately. If the adversary is exactly keeping track of such polynomially
many elements exactly, then it requires polynomial space. However, the adversary
itself could be simulating a distinct elements algorithm, using its persistent
memory, in which case $O(\log m)$ space may suffice. We leave this as an
intriguing open question.

Next we turn to Theorem \ref{thm:deterministic_adversary}. In its proof, we explicitly kept track of the frequency vector, which was forced to be sparse due to the rounding performed by the streaming algorithm. One could imagine making the algorithm even more space efficient by approximately tracking the sparse frequency vector,  but  this will likely depend on the underlying properties of $f$ which we abstracted away. 

\begin{question}
    Is the space bound of Theorem \ref{thm:deterministic_adversary} the right bound? Can we classify a broad family of functions $f$ that use even smaller space, e.g., by approximately tracking the frequency vector?
\end{question}

Lastly, we ask if the techniques of Woodruff and Zhou~\cite{WZ2024} can deal with bounded memory adversaries for the specific problem of $F_2$ estimation.

\begin{question}
    What are the limits of the techniques of Woodruff and Zhou~\cite{WZ2024} for estimating $F_2$ in the \BMAM?
\end{question}

Our construction of a stream with high flip number and high density heavily relies on the use of a single heavy hitter.
Toward answering the above question, it would be interesting to understand whether this is inherent or just an artifact of the specific construction. If it is indeed an inherent feature of the memoryless adversarial setting, than this could mean that dense--sparse strategies such as those of Woodruff and Zhou~\cite{WZ2024}, which leverage deterministic
algorithms for heavy--hitter detection, could also be useful for achieving efficient algorithms that are robust against memoryless adversaries.

We wish to conclude the discussion by pointing out the problem of \emph{missing item finding} (MIF), which has been explored by Stoeckl and Chakrabarti~\cite{Stoeckl2023,CS2024}.
In this problem, the input is a stream of elements from universe $[n]$, and at each point throughout a stream of length $m < n$, the algorithm
is asked to output an element that has not appeared in the stream so far.
Interestingly, MIF is a problem in  which memoryless deterministic adversaries do turn out to have a significant advantage over oblivious adversaries in some regimes.
For instance, the ``echo'' strategy in which the adversary uses the last output of the algorithm as the next stream element turns out very effective,
because it can be used to force the algorithm to find a new missing element.
However, MIF is not a problem for which multiplicative approximation has any meaningful interpretation.
Since many incompatible outputs are possible in this problem, and on each, the adversary can act differently,
this seems to fall into the topic considered in Question~\ref{q:large_range}.
As such, the results we obtain here are incompatible with the existing literature on MIF,
but MIF could be a good starting point for further exploration of
the power of memoryless and low--memory adversaries
in various types of streaming problems.

\bibliographystyle{alpha}
\bibliography{references}

\appendix

\section{The Power of Persistent Randomness}

\begin{lemma}\label{lem:persistent_randomness}
Consider an extension of our model in which the adversary has query access to a~persistent
infinite random binary sequence. The sequence is drawn before the interaction
with the algorithm and the adversary has query access to it throughout the entire
interaction. Each bit of the string is drawn from the uniform distribution on
$\set{0,1}$, independently of any other randomness drawn in the algorithm.

For any algorithm, if there is an adversary who can
break its estimates with probability greater than $\delta$ in this extended model,
then there is an adversary in our model (i.e., without the additional persistent randomness) who can break its estimates
with probability greater than $\delta$ as well.
\end{lemma}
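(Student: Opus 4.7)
The plan is a straightforward averaging (``fix the bad randomness'') argument. Let $\mathcal{A}^{\mathrm{ext}}$ denote the given adversary in the extended model, which has query access to a persistent infinite random binary string $R \in \{0,1\}^{\mathbb{N}}$ drawn uniformly. Write $p(R)$ for the probability, taken over the algorithm's internal coins and over the fresh per--round randomness $r_1, r_2, \ldots$ used by $\mathcal{A}^{\mathrm{ext}}$, that the interaction between $\mathcal{A}^{\mathrm{ext}}$ and the algorithm produces at least one incorrect estimate (i.e., $\mathcal{A}^{\mathrm{ext}}$ succeeds in breaking the algorithm). The hypothesis is exactly that $\mathbb{E}_R[p(R)] > \delta$.

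Then I would invoke the trivial averaging principle: since the expectation of $p(R)$ over $R$ strictly exceeds $\delta$, there must exist some specific $R^\star \in \{0,1\}^{\mathbb{N}}$ for which $p(R^\star) > \delta$ (otherwise the integral would be at most $\delta$). Using such a $R^\star$, define a new adversary $\mathcal{A}$ in the \BMAM{} that is identical to $\mathcal{A}^{\mathrm{ext}}$ except that every query to the $i$--th bit of the persistent random string is replaced by returning the $i$--th bit of $R^\star$, which is hard--coded into the description of $\mathcal{A}$. In particular, $\mathcal{A}$ uses exactly the same estimate memory, working memory, persistent memory bits, and fresh randomness that $\mathcal{A}^{\mathrm{ext}}$ did, so it satisfies the memory restrictions of our model (the string $R^\star$ lives inside the fixed code of $\mathcal{A}$, which is not counted against any of the memory budgets $\mathcal{P}$, $\mathcal{E}$, or $\mathcal{W}$ in Definition~\ref{def:robust_streaming}).

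By construction, the joint distribution of all stream updates and algorithm estimates produced by the interaction with $\mathcal{A}$ is exactly the conditional distribution of the interaction with $\mathcal{A}^{\mathrm{ext}}$ given $R = R^\star$. Hence the probability that $\mathcal{A}$ breaks the algorithm equals $p(R^\star) > \delta$, which is the desired conclusion.

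The only subtle point, and what I would be most careful about, is the measurability/justification of the averaging step: $p(R)$ is a well--defined measurable function of $R$ because the interaction has at most $m$ rounds and each round queries only finitely many bits of $R$, so after truncation $p$ depends on a finite--dimensional marginal; and if $\mathbb{E}[p(R)] > \delta$ then the set $\{R : p(R) > \delta\}$ has strictly positive measure (otherwise $p \le \delta$ almost surely and the expectation would be at most $\delta$), so a witness $R^\star$ exists. Beyond this, there is no computation to carry out, and the rest of the argument is the standard ``derandomize by hard--coding the best seed'' maneuver.
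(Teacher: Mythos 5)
Your averaging step is correct and matches the paper's opening move. But you then hard--code the \emph{entire infinite string} $R^\star$ into the adversary's description and declare this free of cost, writing that ``the string $R^\star$ lives inside the fixed code of $\mathcal{A}$.'' This is exactly the step the paper's proof flags as problematic: there are uncountably many infinite binary strings, so an arbitrary $R^\star$ cannot be baked into the (finitely describable) logic of an adversary in the base model. An adversary's code is a finite object, and you cannot simply ``look up'' the $i$--th bit of an arbitrary infinite sequence inside it.

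The paper's proof closes this gap with a truncation argument that your write-up is missing. Having found a good $r_\star$ with $p(r_\star) \ge \delta' > \delta$, one observes that in any single run the adversary queries only finitely many bits of $r_\star$; hence the random variable $X$ (the largest index queried) is a.s.\ finite, and one can pick a finite threshold $t$ with $\Pr[X \ge t] < (\delta' - \delta)/2$. The modified adversary stores only the first $t$ bits of $r_\star$ inside its (now finite) code and falls back to a fixed default update if a query beyond index $t$ would occur. This costs at most $(\delta' - \delta)/2$ in success probability, so the finite-description adversary still breaks the algorithm with probability at least $(\delta' + \delta)/2 > \delta$. Note that this is also why the paper needs the strict slack $\delta' - \delta > 0$ from the averaging step: your averaging does furnish an $R^\star$ with $p(R^\star) \ge \delta' > \delta$, but you never use the slack, because you never perform the truncation. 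Your measurability concern is legitimate and well handled, but it is not the subtle point of this lemma; the finitary-description issue is.
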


\begin{proof}
Let $\delta' > \delta$ be the probability with which the adversary breaks the
estimates in the extended model. This
occurs for some fixed stream length known to both of the parties. By an
averaging argument, there must be a setting $r_\star \in \set{0,1}^{\mathbb N}$
of the infinite random string for which the adversary succeeds with probability
at least $\delta'$ as well. Hence there is an adversary with no access to an
infinite persistent random string but just to $r_\star$ who also succeeds with
probability at least $\delta'$.

If we could include $r_\star$ in the adversary's internal logic,
we would be able to show that this adversary can be implemented in
our model, which provides no persistent randomness.
This may be difficult to do achieve directly, however,
because including an arbitrary infinite string in the adversary's code is
impossible as there are uncountably many such strings. Fortunately, the
adversary does not have to store the entire $r_\star$ but just its finite
prefix. In every game between the adversary and algorithm, the adversary
accesses a finite number of entries in $r_\star$. Let $X$ be a random variable
equal to the highest index of a bit accessed throughout the game (or 0 if none
are accessed). $X$ is distributed on $\mathbb N$ and there is a threshold $t \in
\mathbb N$ such that $\Pr[X \ge t] < (\delta'-\delta)/2$. We modify the
adversary to remember the first $t$ bits of $r_\star$ and whenever accessing any
other bit of $r_\star$ is attempted, the adversary simply outputs a fixed update. Before the
modification, the adversary was able to make the algorithm output an incorrect
estimate with probability at least $\delta'$. The behavior of the modified
adversary, who only knows the first $t$ bits of $r_\star$, can diverge only if
any bit beyond the first $t$ bits is accessed and since this happens with
probability less than $(\delta' - \delta)/2$, the modified adversary is still
able to break the estimates of the algorithm with probability at least $\delta'
- (\delta' - \delta)/2 = (\delta' + \delta)/2 > \delta$.
\end{proof}

\section{Missing Proofs}

\subsection{Proof of Lemma~\ref{lem:net}.}
\begin{proof}[Proof of Lemma~\ref{lem:net}]
    Let $\mathcal{N} \eqdef \{(1+\eps)^i: i \in \mathbb Z_{+} \land
(1+\eps)^i \le \alpha\} \cup \{\alpha\}$. It is clear that $\mathcal{N} \subseteq [1, \alpha]$.  Additionally,
the size of $\mathcal {N}$ is bounded by $ \lceil \log_{1+\eps}\alpha \rceil
\le 1 + \log_{1+\eps}\alpha = O(\log_{1+\eps}\alpha) = O(\eps^{-1} \log
\alpha)$. now for any real $x \in [1,\alpha]$, take the smallest $i \in \mathbb Z_{+}$ such that $(1+\eps)^{i-1} \le x < (1+\eps)^i$. Then $(1+\eps)^i \le (1+\eps)x$, so if $(1+\eps)^i \in \mathcal{N}$, then this can serve as a valid choice of $y$ for $x$ as in the lemma statement. Otherwise, $(1+\eps)^i > \alpha$, in which case 
$x \le \alpha \le (1+\eps)^i \le (1+\eps)x$,  where $\alpha \in \mathcal{N}$
plays the role of $y$ in the definition.
\end{proof}

\subsection{Proof of Lemma~\ref{lem:amplification}.}

\begin{proof}[Proof of Lemma~\ref{lem:amplification}]
Consider any fixed stream and let $y_\star$ be the value of $f$ at the end of
the stream. By definition, $\alg$ outputs a $(1+\epsilon)$--multiplicative
approximation to $y_\star$ with probability at least $9/10$. In other words,
with probability at least $9/10$, it outputs an estimate $y$ such that $y\in I$,
where $I \eqdef [y_\star,(1+\eps)y_\star)$.

Suppose that we run $t$ independent copies of $\alg$.
For each $i \in [t]$, let $X_i$ be an indicator variable
equal to $1$ if the $i$-th copy of $\alg$ returns
an estimate in $I$, and 0 otherwise.
We have $\mu \eqdef \mathbb E[\sum_{i=1}^{t} X_i]
\ge 9t/10$. Since $X_i$'s are independent, we use the Chernoff bound
to bound the probability that at least half of all the estimates produced by the copies of $\alg$
are outside of $I$:
\begin{align*}
\Pr\left[\sum_{i=1}^{t} X_i \le t/2 \right]
&=
\Pr\left[\sum_{i=1}^{t} X_i \le \left(1-\frac{4}{9}\right) \cdot \frac{9}{10}t \right]
\le
\Pr\left[\sum_{i=1}^{t} X_i \le \left(1-\frac{4}{9}\right) \cdot \mu \right]\\
& \le
\exp\left(- \left(\frac{4}{9}\right)^2 \frac{1}{2} \mu\right)
\le
\exp\left(- \frac{8}{81} \cdot \frac{9}{10} t\right)
\le \exp(-\ln (1/\delta)) = \delta.
\end{align*}
Note that if the median of the estimates is outside of $I$, then at least $t/2$ of the estimates
are strictly lower than $y_\star$ or at least $t/2$ of them are greater than or equal to $(1+\eps)y_\star$.
Hence if the number of the estimates outside of $I$ is less than $t/2$,
the median of the estimates belongs to $I$.
We already know that this happens with probability at least $1 - \delta$.
\end{proof}

\end{document}